\documentclass[letterpaper,10pt,conference]{ieeeconf}
\usepackage[T1]{fontenc}
\usepackage[latin9]{inputenc}
\usepackage{color}
\usepackage{amsmath}

\usepackage{amsthm}
\usepackage{amssymb}
\usepackage{stackrel}
\usepackage{graphicx}
\usepackage{enumerate}
\usepackage[normalem]{ulem}
\usepackage{esint}

\usepackage{soul}

\makeatletter

\theoremstyle{definition}
\newtheorem{definition}{Definition}
\theoremstyle{plain}

\theoremstyle{definition}
\newtheorem{assumption}{Assumption}
\theoremstyle{plain}
\newtheorem{theorem}{Theorem}
\theoremstyle{remark}

\theoremstyle{plain}
\newtheorem{lemma}{Lemma}
\theoremstyle{plain}

\usepackage{tikz-cd}     
\usepackage{quiver}      
\usepackage{tikz}
\usetikzlibrary{calc,decorations.pathreplacing}
\usetikzlibrary{arrows.meta,decorations.pathreplacing}
\usetikzlibrary{animations} 
\usetikzlibrary[animations] 

\usepackage{todonotes}
\todostyle{keith}{color=purple!40,inline,author=KC}
\todostyle{brandon}{color=pink!40,inline,author=BF}
\todostyle{wilmer}{color=green!40,inline,author=WL}
\todostyle{any}{color=red!30,inline,author=ANY}

\usepackage{cite}
\usepackage{multirow}

\makeatother

\usepackage{babel}

\IEEEoverridecommandlockouts

\begin{document}

\title{From Local to Global: Sheaf-Theoretic Control Barrier Functions on Manifolds}

\author{Keith A. Currier, Wilmer Leal, Brandon C. Fallin, James Fairbanks, Warren E. Dixon
\thanks{Keith Currier, Wilmer Leal, Brandon Fallin, James Fairbanks, and Warren Dixon are with the Department of Mechanical and Aerospace Engineering at the University of Florida, Gainesville, FL, USA. Emails: {\texttt{\{k.currier, wleal, brandonfallin, fairbanksj, wdixon\}@ufl.edu}}. KC, WL, BF, and WD were supported by AFOSR (FA9950-19-1-0169), and ONR (N00014-21-1-2481). JF was supported by DARPA (HR0011222038) and ONR (N00014-23-1-2339). Any opinions, findings, and conclusions or recommendations are those of the authors and do not necessarily reflect the views of the sponsoring agency.}
}

\maketitle
\begin{abstract}
Control barrier functions (CBFs) provide formal safety guarantees for dynamical systems by rendering a safe set forward-invariant via pointwise-optimal feedback control. While the CBF framework is well established for dynamical systems in Euclidean space, there exist gaps for systems evolving on non-Euclidean manifolds. For systems modeled as constrained control differential inclusions, we introduce the Sheaf-CBF: a sheaf-theoretic framework for CBFs on smooth manifolds and vector bundles. The differential inclusions are formulated intrinsically on the control bundle, and analogous regularity properties in the Euclidean case are established fiberwise under local trivializations. Regularity conditions are given under which a pointwise-optimal feedback law is single-valued and continuous over local charts. Framed as a local-to-global optimization problem, sheaf-theoretic restriction and gluing compatibility conditions on overlapping charts are shown to uniquely determine a continuous global section of the control bundle, thereby guaranteeing forward invariance of the safe set. Numerical simulations on a nontrivial M{\"o}bius strip demonstrate that failure of sheaf compatibility results in violations of safety, whereas the Sheaf-CBF maintains forward invariance.  
\end{abstract}

\section{Introduction}
Control barrier functions (CBFs) are widely used in safety-critical control design because they provide formal safety guarantees through explicitly defined control input constraints, which render a set of states forward invariant \cite{Ames.Grizzle.ea2014, Ames.Coogan.ea2019}. Modeling dynamical systems as differential inclusions with safety guarantees provides a method of capturing uncertainty by allowing a range of possible trajectories while remaining safe, and is useful for robust control design for continuous, discrete, and hybrid systems \cite{Goebel.Sanfelice.ea2012, Sanfelice2021}.
The work in \cite{Isaly.Mamaghani.ea2024} investigated conditions under which feedback controllers defined by vector-valued CBFs remain feasible and continuous for continuous-time dynamics modeled as a constrained control differential inclusion. The vector-valued CBF approach was extended to hybrid  systems in \cite{Gonzalez.Sweatland.ea2025}, establishing analogous regularity conditions and safety guarantees for the pointwise-optimal feedback law. 

\begin{figure}[h]
    \centering
    \includegraphics[width=1.0\linewidth]{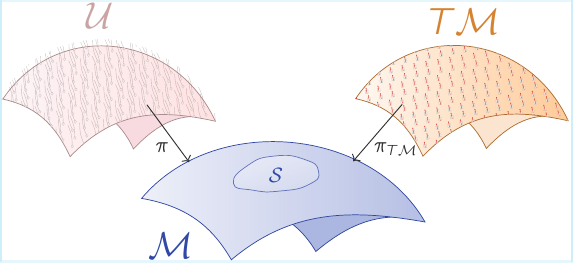}
    \caption{Control and tangent bundles over the state manifold. The control bundle $\pi:\mathcal{U}\to\mathcal{M}$ and tangent bundle $\pi_{T\mathcal{M}}:T\mathcal{M}\to\mathcal{M}$ associate each state $x\in\mathcal{M}$ with its control fiber $\mathcal{U}_x$ and tangent space $T_x\mathcal{M}$, respectively. The safe set $\mathcal{S}\subseteq\mathcal{M}$ is shown in the state manifold.}
    \label{fig:3D-Vector-Bundle-Diagram}
\end{figure}
The CBF framework is well established for systems in Euclidean space, where the state and control input spaces are modeled as a single global Cartesian product ($\mathbb{R}^n \times \mathbb{R}^m$). However, there is a substantial gap in the literature for CBFs over non-Euclidean spaces, which arise naturally in geometric control systems. Using differential geometry and geometric control formulations from \cite{Bullo.Lewis2005}, the first manifold-based CBF formulation in \cite{Wu.Sreenath2015} constructs CBFs for mechanical systems that exist on a non-Euclidean manifold like $\mathrm{SO(3)}$. Recent work in \cite{deSa.Ong.ea2025} constructs CBFs in the geometric setting by modeling nonlinear control systems as vector bundles with maps from a control input manifold into the tangent space, representing system dynamics as velocity vector fields on the tangent bundle. The two-part series in \cite{Jirwankar.Sanfelice2025a} and \cite{Jirwankar.Ochoa.ea2025b} formalize hybrid dynamical systems for a continuously differentiable manifold, recovering hybrid Lyapunov stability and general hybrid invariance. 

However, existing approaches lack a fully intrinsic formulation of constrained control differential inclusions on vector bundles. Our work addresses this gap for differential inclusions and enables a coordinate-free representation of constrained dynamics on manifolds via a local chart construction. The motivating challenge for formal safety guarantees on manifolds with no global chart naturally separates into two questions: whether safe optimal controllers can be constructed with appropriate regularity locally, and whether these local controllers are compatible enough on overlaps to define a single global feedback law. 
Sheaf theory is precisely the mathematical framework for addressing this local-to-global problem through restriction and gluing \cite{MacLane.Moerdijk.ea1994} and related sheaf-theoretic constructions have been used to assemble compatible local models into global ones in control-theoretic settings~\cite{LealNarratives2026,TemporalCellularSheaf2026}. 
Local feedback laws can be viewed as sections over an open cover of the state space, with compatibility conditions ensuring they assemble into a global section of the control bundle. This perspective allows the pointwise-optimal control law of the CBF to be formulated as a global section obtained by gluing compatible local controllers, which we formalize as the Sheaf-CBF.

The contributions of this work address these two aspects of the problem. First, we reformulate the constrained differential inclusion system as a model on the control bundle and tangent bundle. For this reframed problem, we extend regularity properties to constrained control differential inclusions on vector bundles, fiberwise, providing local chartwise regularity conditions using standard set-valued function analysis \cite{Rockafellar.Wets2009} and recent CBF motivating work \cite{Isaly.Mamaghani.ea2024, Gonzalez.Sweatland.ea2025}, and \cite{Maghenem.Sanfelice2021} on the bundle itself. With fiberwise regularity conditions, we formulate sheaf-theoretic restriction and gluing conditions to find the global section of the control bundle that yields a continuous feedback control law. The main contribution is the first sheaf-theoretic CBF framework that establishes forward invariance guarantees via restriction and gluing conditions on overlapping charts. Using local charts ensures sheaf-theoretic guarantees for nontrivial control bundles. To validate this approach, we provide simulation results for a target tracking problem on a nontrivial control bundle modeled as a M{\"o}bius strip. The simulations demonstrate that failing to incorporate proper sheaf conditions leads to safety violations, whereas the Sheaf-CBF maintains forward invariance throughout. 

\section{Notation and Preliminaries}
\subsection{General Notation}
For a set $A$, the boundary is denoted as $\partial A$, its closure as $\bar{A}$, its interior as $\mathrm{int}(A)$, and an open neighborhood of $A$ as $\mathcal{N}(A)$. The set $A$ is compact if every open cover of $A$ has a finite subcover, and is precompact if $\bar{A}$ is compact. For the sets $X,\,Y,\,Z$ and maps $f:X\to Y$ and $g:Y\to Z$, their composition  $g\circ f:X\to Z$ is defined by $(g\circ f)(x)\triangleq g(f(x))$. The identity map $\mathrm{id}_X:X\rightarrow X$ is defined by $\mathrm{id}_X(x)=x$. For a Cartesian product $X_1 \times X_2$, the projection onto the first and second factors is denoted by $\mathrm{pr}_1$ and $\mathrm{pr}_2$, respectively. The general linear group over $\mathbb{R}$, denoted by $\mathrm{GL}(n,\mathbb {R})$, is the group of all invertible $n\times n$ matrices with real entries, whose group operation is matrix multiplication. The notation $X\cong Y$ denotes an isomorphism between $X$ and $Y$. A set-valued map (multifunction) $\Psi:A\rightrightarrows B$ assigns each $x\in A$ a set $\Psi(x)\subset B$. The domain of $\Psi$ is $\mathrm{dom}(\Psi)\triangleq \{ x\in A \mid \Psi(x)\neq \emptyset\}.$
The set-valued map $\Psi$  is outer semicontinuous (osc) at $x$ if ${\mathrm{lim}_{z\rightarrow x}\,\mathrm{sup}\,\Psi(z)}\subset\Psi(x)$, inner semicontinuous (isc) at $x$ if $\liminf_{z\rightarrow x}\,\Psi(z)\supset\Psi(x)$, and is continuous at $x$ if $\Psi$ is osc and isc at $x$.

\subsection{Differential Geometry}
Introducing control differential inclusions on general manifolds requires an overview of differential geometry. For in-depth coverage of differential geometry, see works such as \cite{Lee2012} and \cite{Tu2011}, which we use to establish the subsequent definitions. Throughout, all manifolds are smooth (Hausdorff and second-countable), and all vector bundles are smooth. 

For a smooth manifold $\mathcal{M}$ (base manifold), a smooth space over $\mathcal{M}$ is a smooth projection map $\pi: E\rightarrow\mathcal{M}$ from a smooth manifold $E$ (total space). The \textit{fiber} over $p\in\mathcal{M}$ is $E_p\triangleq \pi^{-1}(p)$, where a point on the fiber is an element $e_p\in E_p$ of the total space with respect to the base manifold. A \textit{section} is a map $s:\mathcal{M}\rightarrow E$ where  $s(p)\in E_p$ for each $p\in\mathcal{M}$. The \textit{space of global sections} is defined as $\Gamma(E)=\{s:\mathcal{M}\rightarrow E\mid \pi\circ s=\mathrm{id}_{\mathcal{M}}\}$. For any open set $U\subseteq\mathcal{M}$, the \textit{restriction} of the total space is defined as  $\left. E \right|_{U}\triangleq\pi^{-1}(U)$. The local section over $U$ is a map $s:U\rightarrow \left . E \right|_{U}$ where $s(p)\in E_p$ for every $p\in U$. The \textit{space of local sections} is defined as $\Gamma(U,E)=\{s:U\rightarrow \left. E \right|_{U} \mid \pi\circ s=\mathrm{id}_{U}\}$. The \textit{tangent space} of $\mathcal{M}$ at $p\in\mathcal{M}$ is denoted as $T_p\mathcal{M}$. 

A \textit{fibered manifold} over a smooth manifold $\mathcal{M}$ is a triple $(E,\pi,\mathcal{M})$, where $E$ is a smooth manifold and $\pi$ is a smooth surjective submersion $\pi:E\rightarrow\mathcal{M}$. 
A fibered manifold $(E,\pi,\mathcal{M})$ is a
\textit{smooth fiber bundle} with model fiber $F$ if $F$ is a smooth manifold and,
for every $x\in\mathcal{M}$, there exists an open neighborhood $U$ and a diffeomorphism $\phi:\pi^{-1}(U)\to U\times F$, called a
\textit{local trivialization}, that is fiber-preserving, i.e.,
$\mathrm{pr}_1\circ\phi=\pi$ on $\pi^{-1}(U)$. The \textit{tangent bundle} is the disjoint union of tangent spaces of $\mathcal{M}$ as $T\mathcal{M}=\bigsqcup_{p\in\mathcal{M}} T_{p}\mathcal{M}$ with $\pi_{T\mathcal{M}}: T\mathcal{M}\rightarrow\mathcal{M}$. For a subset $A\subset E$, the projection is denoted $\Pi(A)\triangleq\pi(A)=\{p\in\mathcal{M}\mid  \exists\, e_p\in E_p \; \text{s.t.} \;  e_p \in A\}$. A \textit{rank-$k$ vector bundle} is a fiber bundle $(E,\pi,\mathcal{M})$ with the model fiber $F=\mathbb{R}^k$, where each fiber $E_p$ is a $k$-dimensional real vector space and each local trivialization $\phi:\pi^{-1}(U)\rightarrow U \times \mathbb{R}^k$ restricts to a linear isomorphism on each fiber.
A vector bundle is \textit{trivial} if it admits a global trivialization, i.e., the local trivialization  $\phi:\pi^{-1}(U)\rightarrow U \times \mathbb{R}^k$ extends to all $\mathcal{M}$ with $U=\mathcal{M}$. 
A \textit{chart} on a manifold $\mathcal{M}$ is a pair $(U,\phi)$, where $U\subseteq\mathcal{M}$ is the open coordinate domain and $\phi$ is a coordinate map. For a vector bundle, a \textit{bundle atlas} $\{(U_\alpha,\phi_\alpha)\}_{\alpha \in A}$ is a collection of local trivializations covering $\mathcal{M}$.

Any nontrivial bundle requires multiple local trivializations to correctly map coordinate changes across fibered elements.
Following \cite[Lem. 10.5]{Lee2012}, given a vector bundle $(E,\pi,\mathcal{M})$, and two overlapping open sets $U_\alpha\cap U_\beta\neq \emptyset$ of $\mathcal{M}$, and two local trivializations on $E$ given by $\phi_\alpha:\pi^{-1}(U_\alpha)\rightarrow U_\alpha \times \mathbb{R}^k$ and $\phi_\beta:\pi^{-1}(U_\beta)\rightarrow U_\beta \times \mathbb{R}^k$,  there exists a smooth map called a \textit{transition function} $g_{\alpha\beta}: U_\alpha\cap U_\beta\rightarrow \mathrm{GL}(k,\mathbb{R})$ that linearly maps fibered elements on the overlap. The transition function maps $\phi_\alpha \circ \phi_{\beta}^{-1}$ from $(U_\alpha\cap U_\beta) \times \mathbb{R}^k$ to itself and has the form $\phi_\alpha \circ \phi_{\beta}^{-1}(p,v)=(p,g_{\alpha\beta}(p)v)$, where $g_{\alpha\beta}(p)v$ denotes at $p$ the matrix multiplication of the $k\times k$ matrix $g_{\alpha\beta}(p)\in\mathrm{GL}(k,\mathbb{R})$ on the vector $v\in\mathbb{R}^k$.

\subsection{Euclidean Constrained Differential Inclusions}
Consider a constrained control differential inclusion $(\Phi,C_u)$ with the state $x\in\mathbb{R}^n$ and input $u\in\mathbb{R}^m$ such that 
\begin{equation}
\label{eq:Euclidean_system}
    \dot{x}\in \Phi(x,u),\quad (x,u)\in C_u,
\end{equation}
where $\Phi:\mathbb{R}^n \times\mathbb{R}^m \rightrightarrows \mathbb{R}^n$ is the set-valued flow map and $C_u\subset\mathbb{R}^n\times\mathbb{R}^m$ is the set of flows. The set of states where flow exists is defined by 
\begin{equation}
\label{eq:Euclidean_Flow_State_Set}
    \Pi(C_u)\triangleq \{x\in\mathbb{R}^n\mid \exists\, u\in \mathbb{R}^m\, \text{s.t.}\; (x,u)\in C_u\},
\end{equation}
where the set of admissible control inputs for $x\in \Pi(C_u)$ is 
\begin{equation}
\label{eq:Euclidean_Admissible_Set}
    \Psi(x) \triangleq \{ u\in \mathbb{R}^m  \mid (x,u)\in C_u\}.
\end{equation}

Given a safe set $\mathcal{S}\subset\mathbb{R}^n$, the goal is to construct a CBF candidate to guarantee the existence of control inputs satisfying the forward invariance of flows in $\mathcal{S}\subset\Pi(C_u)$. 

\begin{definition}\textcolor{black}{\cite[Def. 1]{Isaly.Mamaghani.ea2024}}
\label{def:Euclidean_CBF_Candidate}
A continuously differentiable vector-valued function 
$B:\mathbb{R}^n \rightarrow \mathbb{R}^v$ with components 
$B(x) = (B_1(x), \ldots, B_v(x))$ is a CBF candidate 
defining the safe set $\mathcal{S} \subset \Pi(C_u)$ if
\begin{equation}
    \mathcal{S} = \{x \in \Pi(C_u) : B(x) \leq 0\}.
\end{equation}
For each $i \in [v]$, define $\mathcal{S}_i \triangleq 
\{x \in \Pi(C_u) : B_i(x) \leq 0\}$ and 
$M_i \triangleq \{x \in \partial\mathcal{S} : B_i(x) = 0\}$.
Given a CBF candidate, the worst-case flow growth 
function $\Upsilon: C_u \rightarrow \mathbb{R}^v$ with 
$i^{\text{th}}$ component is denoted by
\begin{equation}
    \Upsilon_i(x, u) \triangleq 
    \sup_{f \in \Phi(x,u)} \langle \nabla B_i(x),\, f\rangle, 
    \quad \forall\, (x,u) \in C_u.
    \label{eq: Euclidean_flow_growth}
\end{equation}
\end{definition}
To enforce constraint satisfaction, a performance function $\gamma:\Pi (C_u) \to \mathbb{R}^{v}$ is introduced as a design parameter following \cite{Isaly.Mamaghani.ea2024}. Each $i\in[v]$, the component $\gamma_i:\Pi (C_u) \to \mathbb{R}$ constrains the worst-case growth of $\Upsilon_i$. 

\begin{definition}\textcolor{black}{\cite[Def. 2]{Isaly.Mamaghani.ea2024}}
\label{def: Euclidean_CBF}
A CBF candidate $B:\mathbb{R}^n \rightarrow \mathbb{R}^v$ 
defining $\mathcal{S}$ is a CBF for \eqref{eq:Euclidean_system} 
and $\mathcal{S}$ on a set $\mathcal{P} \subset \Pi(C_u)$ 
with respect to a performance function 
$\gamma:\Pi(C_u) \rightarrow \mathbb{R}^v$ if there exists 
a neighborhood $\mathcal{N}(\partial\mathcal{S})$ such that: 1) $\mathcal{N}(\partial \mathcal{S}) \subset \mathcal{P}$; 2) for each $i \in [v]$, $\gamma_i(x) \geq 0$ for all $x \in \mathcal{N}(M_i) \setminus \mathcal{S}_i$; 3) for all $x \in \mathcal{P}$,  the set of control inputs is nonempty:
    \begin{equation}
        K_{\mathrm{CBF}}(x) \triangleq \{u \in \Psi(x) : 
        \Upsilon(x,u) \leq -\gamma(x)\}
        \label{eq: Euclidean_KCBF}
    \end{equation}
\end{definition}
The safety-constrained feedback law $\kappa^*:\mathbb{R}^n \to \mathbb{R}^m$ is defined as the pointwise-optimal selection
\begin{equation}
    \kappa^*(x) \triangleq 
    {\mathrm{arg\,min}_{u \in \Psi(x)}}\; Q(x,u) \quad 
    \text{s.t.}\quad \mathcal{C}(x,u)\leq 0,
    \label{eq: Euclidean_QP}
\end{equation}
where $\mathcal{C}(x,u)\triangleq \Upsilon(x,u) +\gamma(x)$ is the constraint function, $Q: \mathcal{P} \times \mathbb{R}^m \rightarrow 
\mathbb{R}$ is a user-defined cost, typically 
$Q(x,u) = \|u - u_{\mathrm{nom}}\|^2$ for a nominal 
controller $u_{\mathrm{nom}} \in \Psi(x)$, ensuring 
$\kappa^*(x) \in \Psi(x)$ for all $x \in \mathcal{P}$.

The construction of constrained control differential inclusions on Euclidean spaces admits a natural geometric interpretation. The connection to vector bundles follows by recognizing that the product space $\mathbb{R}^n \times \mathbb{R}^m$ is canonically isomorphic to a globally trivial total space $\mathcal{U}$ \cite[Ex. 10.2]{Lee2012} where control inputs live fiberwise over the state space $\mathcal{M}=\mathbb{R}^n$, yielding the bundle $(\mathbb{R}^n\times\mathbb{R}^m,\mathrm{pr}_1,\mathbb{R}^n)\cong(\mathcal{U},\pi,\mathcal{M}).$ 
The vector bundle analog of the set-valued flow map $\Phi$ is a multifunction over the state space. Specifically, $\Phi$ maps from the product space $\mathcal{U}$ into the tangent space $T\mathcal{M}\cong \mathbb{R}^n \times \mathbb{R}^n$. The approach in \cite{deSa.Ong.ea2025} introduced this vector bundle formulation and models a control system's dynamics as a single-valued map. In this geometric setting, set-valued maps $\Psi:\mathbb{R}^n \rightrightarrows\mathbb{R}^m$ are fiber-valued multifunctions on the trivial bundle. On a general manifold $\mathcal{M}$, the input manifold $\mathcal{U}$ may be nontrivial if it admits no global trivialization. For control on a nontrivial rank-$m$ control bundle, $u_x\in\mathcal{U}_x$ denotes the intrinsic control input of the fiber $\mathcal{U}_x$ over $x$. The intrinsic element $u_x$ is coordinate independent, whereas its local coordinate representation depends on the trivialization. Under a local trivialization $\phi_{\alpha}$ over $U_{\alpha}$, the local
representative is $u_\alpha \triangleq\mathrm{pr}_{2}\circ\phi_{\alpha}(u_{x})
\in\mathbb{R}^{m}$. On an overlap $x\in U_{\alpha}\cap U_{\beta}$, the
representatives $u_{\alpha},u_{\beta}\in\mathbb{R}^{m}$ describe the same
intrinsic input $u_{x}$ when $u_{\alpha}=g_{\alpha\beta}(x)\,u_{\beta}$, where $g_{\alpha\beta}(x)\in\mathrm{GL}(m,\mathbb{R})$ is the transition function. A standard example is the M\"obius strip, a nontrivial real line bundle over a circle ($S^1$) which has no global trivialization and  transitions occur as sign changes of the fiber coordinate across chart overlaps.

\subsection{Sheaf Compatibility}
\label{sec:Sheaf Compatibility}
A feedback law is a section of the control bundle assigning a control input to every state in its domain.  Therefore, pointwise compatibility under bundle transition maps must extend to compatibility of local feedback laws on chart overlaps.
If local feedback laws fail to represent the same intrinsic section on overlaps, a global feedback law cannot be assembled. The M\"obius strip example in Section \ref{sec: Simulated Results} illustrates how neglecting local compatibility can result in a loss of safety. Sheaf theory provides the formalism for synthesizing local objects that agree on overlaps into a unique global object, specifically through restriction and gluing~\cite{MacLane.Moerdijk.ea1994,Lee2024}.  Accordingly, we organize continuous feedback laws as local sections and introduce the corresponding presheaf and sheaf structure to characterize when local controllers uniquely determine a global feedback law.
 
Given a rank-$k$ vector bundle $(\mathcal{U},\pi,\mathcal{M})$ and an open set $U\subseteq\mathcal{M}$, a local section of the restricted bundle $\mathcal{U}\big|_U$ is a map $\kappa:U\to\mathcal{U}\big|_U$ satisfying $\pi\circ\kappa=\mathrm{id}_U$, and the space of local sections is given by $\Gamma(U,\mathcal{U})=\{\kappa:U\to\mathcal{U}\big|_U\mid\pi\circ\kappa=\mathrm{id}_U\}$. Restricting to continuous sections, define $\mathcal{F}(U)\triangleq\{\kappa\in\Gamma(U,\mathcal{U})\mid\kappa\text{ is continuous}\}$.
Let $\mathcal{O}(\mathcal{M})$ denote the category of open subsets of $\mathcal{M}$ ordered by inclusion, and let $\mathbf{Set}$ denote the category of sets.  Consider the assignment $U \mapsto \mathcal{F}(U)$ on objects and, for each inclusion $V \subseteq U$, the assignment $(V \subseteq U) \mapsto \rho_V^U$ on morphisms, where $\rho_V^U:\mathcal{F}(U)\to\mathcal{F}(V)$ is the restriction map defined by $\rho_V^U(\kappa)=\kappa|_V$. These assignments define a presheaf
\begin{equation}
\label{eq:presheaf-sections}
    \mathcal{F}:\mathcal{O}(\mathcal{M})^{\mathrm{op}}\to \mathbf{Set}.
\end{equation}
To establish that $\mathcal{F}$ is a sheaf, it remains to verify the gluing property for compatible local sections. For any open cover $\{U_\alpha\}_{\alpha\in A}$ of $U$, a collection $\{\kappa_\alpha\in\mathcal{F}(U_\alpha)\}_{\alpha\in A}$ that agrees on overlaps satisfying
\begin{equation}
\label{eq: Restriction Compatibility condition}
    \kappa_\alpha\big|_{U_{\alpha} \cap U_\beta} = \kappa_\beta\big|_{U_{\alpha} \cap U_\beta},\quad \forall\,\alpha,\beta\in A,
\end{equation}
uniquely determines a continuous section $\kappa \in \mathcal{F}(U)$ such that $\kappa\big|_{U_\alpha}= \kappa_\alpha$. Hence, $\mathcal{F}$ is a sheaf. Compatibility ensures that the pointwise definition $\kappa(x)=\kappa_\alpha(x)$ for $x\in U_\alpha$ is independent of $\alpha$, while the continuity of $\kappa$ follows locally from the continuity of $\kappa_\alpha$.  Thus, compatible continuous local feedback laws glue uniquely to a global continuous feedback law.  In the CBF setting, however, we are interested in continuous local sections satisfying the CBF constraints $\kappa(x)\in K_{\mathrm{CBF}}(x)$. For each open $U\subseteq\Pi(C_u)$, define 
\begin{equation} \label{eq: sheaf_KCBF}
\mathcal{K}_{\mathrm{CBF}}(U) \triangleq \left\{ \kappa\in\mathcal{F}(U) \mid \kappa(x)\in K_{\mathrm{CBF}}(x),\ \forall x\in U \right\}. 
\end{equation}  
Since membership in $K_{\mathrm{CBF}}(x)$ is a pointwise condition, a compatible family of sections $\{\kappa_\alpha\in\mathcal{K}_{\mathrm{CBF}}(U_\alpha)\}_{\alpha\in A}$ glues through the sheaf $\mathcal{F}$ to a unique continuous section $\kappa\in\mathcal{K}_{\mathrm{CBF}}(U)$. Thus, the difficulty is not whether compatible sections can be glued, but whether the pointwise-optimal controllers constructed on local trivializations define such sections. For a general control bundle, the admissible set $K_\mathrm{CBF}(x)$ varies fiberwise over the state manifold, so the regularity required for the feedback law must be established over local trivializations and related intrinsically to the bundle. Local controllers must also be compatible under transition maps. These requirements motivate Section~\ref{sec: Fiberwise_Regularity} and \ref{sec: Sheaf Compatibility Conditions}, respectively: establishing local fiberwise regularity and compatible gluing of local controllers.

\section{Problem Formulation} 
\label{sec: Problem Formulation}
\subsection{System Dynamics}
Let $(\mathcal{U},\pi,\mathcal{M})$ be a rank-$m$ control bundle over the state manifold $\mathcal{M}$, where $\mathcal{U}$ is the total control space, $\pi:\mathcal{U}\to \mathcal{M}$ is the bundle projection and $\mathcal{U}_x\triangleq \pi^{-1}(x)$ is the control fiber over $x\in\mathcal{M}$.
A constrained control differential inclusion $(\Phi,C_u)$ with state $x\in\mathcal{M}$ and input on a fiber $u_x\in \mathcal{U}_x$ is 
\begin{equation}
    \dot{x} \in \Phi(u_x),\quad u_x\in C_u, \label{eq:System_1}
\end{equation}
where $\Phi:\mathcal{U}\rightrightarrows T\mathcal{M}$ is a fiber-valued multifunction that associates every admissible input on the fiber $u_x\in\mathcal{U}_x$ with admissible velocities $\Phi(u_x)\subseteq T_x\mathcal{M}$, and $C_u\subset\mathcal U$ is the set of admissible inputs admitting flow.
The differential inclusion is expressed as a fiberwise Minkowski sum 
\begin{equation}
    \Phi(u_x) = F(x)+ G_xu_x \triangleq \{v_x+ G_xu_x\mid v_x\in F(x)\},
    \label{eq:System_2}
\end{equation}
where  $F:\mathcal{M}\rightrightarrows T\mathcal{M}$ is a fiber-valued multifunction that associates points on the state space to a set of the tangent bundle such that $F(x)\subseteq T_x\mathcal{M}$, and $G_x:\mathcal{U}_x \to T_x\mathcal{M}$ is the fiberwise linear map into the tangent space. The projection of $C_u \subset\mathcal{U}$ onto $\mathcal{M}$ denotes the allowable flow set as
\begin{equation}
    \Pi(C_u) \triangleq \{ x\in\mathcal{M}\mid \exists\; u_x\in \mathcal{U}_x\; \text{ with }\; u_x\in C_u \}.
\end{equation}
For each state $x\in\Pi(C_u)$, the set of admissible control inputs on the fiber is denoted as 
\begin{equation}
    \Psi(x) \triangleq \{ u_x\in C_u  \mid u_x\in \mathcal{U}_x\}=C_u\cap \mathcal{U}_x.
\end{equation}

\subsection{Control Barrier Functions}
The safe set $\mathcal{S} \subset \mathcal {M}$ is a regular domain \cite[Prop.~5.47]{Lee2012}, i.e., a properly embedded codimension-0 submanifold of $\mathcal{M}$ whose boundary $\partial\mathcal{S}$ is of codimension-1 with respect to $\mathcal{M}$. 
To characterize $\mathcal{S}$, we adopt the CBF candidate from Definition \ref{def:Euclidean_CBF_Candidate} on the state space manifold $B:\mathcal{M}\rightarrow\mathbb{R}^v$.

\begin{definition}\label{def:CBF_on_a_bundle}
A continuously differentiable CBF candidate $B:\mathcal{M} \to \mathbb{R}^{v}$ defining the safe set $\mathcal{S}$ is a CBF for \eqref{eq:System_1} and $\mathcal{S}$ on a set $\mathcal{P} \subset \mathcal{M}$ with respect to a performance function $\gamma$ if there exists a neighborhood $\mathcal{N}(\partial\mathcal{S})$ of the boundary of $\mathcal{S}$, such that: 1) $\mathcal{N}(\partial\mathcal{S}) \subset \mathcal{P}$; 2) for each $i \in [v]$, $\gamma_{i}(x) \ge 0$ for all $x \in \mathcal{N}(M_{i}) \setminus \mathcal{S}_{i}$; 3) for all $x \in \mathcal{P}$, the set
\begin{equation}
\label{eq: K_CBF}
    K_{\mathrm{CBF}}(x) \triangleq \{\, u_x \in \Psi(x) : \Upsilon(u_x) \le -\gamma(x) \,\}
\end{equation}
is nonempty, where $\Upsilon:C_u \to \mathbb{R}^{v}$ denotes the worst-case directional growth of $B$ along \eqref{eq:System_1}. For each $i \in [v]$, the component $\Upsilon_{i}:C_u \to \mathbb{R}$ is given by
\begin{equation}
        \Upsilon_{i}(u_x) \triangleq  \underset{\dot{x}\in\Phi(u_{x})}{\mathrm{sup}}\left\langle d B_{i}\left(x\right),\dot{x}\right\rangle.
\end{equation}
\end{definition}

The fiber-valued multifunction $K_{\mathrm{CBF}}:\Pi(C_u)\rightrightarrows\mathcal{U}$ assigns to each state $x\in \mathcal{P}$ a set of viable control inputs $K_{\mathrm{CBF}}(x)\subseteq \mathcal{U}_x$, each of which guarantees that solutions remain in $\mathcal{S}$ for all time. 
When $\Phi$ is nonempty and locally precompact for each $u_x\in C_u$, $\Upsilon_i(u_x)<\infty$ is well defined for $i\in [v]$ such that flows are contained. Accordingly, the following regularity assumption is imposed. 

\begin{assumption}
The fiber-valued multifunction $\Phi :\mathcal{U} \rightrightarrows T\mathcal{M}$ is locally precompact relative to $C_u$ and $C_u \subseteq \mathrm{dom}(\Phi)$. 
\end{assumption}
The closed-loop system $(\Phi_{\mathrm{cl}},C_{\kappa})$ defined by $(\Phi,C_u)$ in \eqref{eq:System_1} and with a feedback controller $\kappa:\mathcal{M} \to \mathcal{U}$ is
\begin{equation}
\label{eq: Closed_Loop_System}
    \dot{x}\in\Phi_{\mathrm{cl}}(\kappa(x)), \quad x\in \Pi (C_{\kappa}),
\end{equation}
where $C_\kappa \subset\mathcal{U}$ is the closed-loop flow set under $\kappa$, $\Pi (C_\kappa)\triangleq \{ x\in \mathcal{M} \mid \kappa(x)\in \Psi(x)  \}$ are the states where $\kappa$ admits flow, and $\Phi_{\mathrm{cl}}=F(x)+ G_x \kappa(x)$ is the closed-loop fiber-valued multifunction under the selection of $\kappa$, noting $\Phi_{\mathrm{cl}}\triangleq \emptyset$ if $x\notin \Pi (C_{\kappa})$. Next, we define notions of forward invariance for the constrained differential inclusion introduced in \eqref{eq: Closed_Loop_System}. 
\begin{definition}{\cite[Def. 4]{Isaly.Mamaghani.ea2024}}
    \label{def: Forward-Invariance-Euclidean}
    A set $\mathcal{S}\subset \Pi (C_\kappa)$ is forward pre-invariant for $(\Phi_{\mathrm{cl}}, C_\kappa)$ if for each $x(t_0)\in\mathcal{S}$ and each maximal solution $x$ starting from $x(t_0)$, $x(t)\in \mathcal{S}$ for all $t\in \mathrm{dom}(x)$. The set $\mathcal{S}$ is forward invariant for $(\Phi_{\mathrm{cl}}, C_\kappa)$ if it is forward pre-invariant and, for each $x(t_0)\in \mathcal{S}$, every maximal solution $x$ starting from $x(t_0)$ is complete. 
\end{definition}
As in the Euclidean case, the goal for CBFs modeled on bundles is to design a feedback controller $\kappa$ such that $\kappa(x) \in \Psi(x)$ for all $x \in \mathcal{M}$, and the safe set $\mathcal{S} \subset \mathcal{M}$ is forward invariant in the sense of Definition \ref{def: Forward-Invariance-Euclidean}, under the closed-loop system in \eqref{eq: Closed_Loop_System}. Reframed for vector bundles, a control input $\kappa$ in ${C_\kappa \subset\mathcal{U}}$ is selected at each state $x\in\mathcal{M}$. We define the feedback law  
$\kappa^{*}:\mathcal{M}\rightarrow\mathcal{U}$ as 
\begin{align}
\kappa^{*}\left(x\right) & \triangleq {\arg\min}_{u_x\in\Psi(x)}\;Q\left(u_x\right)\label{eq:kappa star}
\quad 
    \text{s.t.}\quad \mathcal{C}(u_x)\leq0, 
\end{align}
where $Q:\mathcal{P} \times \mathcal{U}_x \rightarrow\mathbb{R}$ is a user-defined objective function, typically as $Q(u_x)=\left\Vert u_x-u_{\mathrm{nom}}\right\Vert_{\mathcal{U}} ^{2}$, for a nominal control law $u_{\mathrm{nom}}\in\Psi$, where the norm $\left\Vert \cdot\right\Vert_{\mathcal{U}}$ is given by the fiber metric on $\mathcal{U}$. By choosing $\mathcal{C}(u_x) \triangleq \Upsilon(u_x) + \gamma(x)$, every optimizer satisfies $\kappa^*(x)\in K_{\mathrm{CBF}}(x)$ for all $x\in \mathcal{P}$. 
However, whether $\kappa^*$ defines the global section $\kappa^\star \in \Gamma(\mathcal{U})$ depends on the following: $K_\mathrm{CBF}(x) \neq \emptyset$, single-valuedness and continuity on $\Pi(C_u)$, and sheaf gluing conditions. 

\section{Sheaf Control Barrier Functions}

\subsection{Fiberwise Regularity}
\label{sec: Fiberwise_Regularity}
Ensuring the intrinsic feedback law $\kappa^*$ is feasible, single-valued, and continuous requires extending the Euclidean regularity conditions in \cite{Isaly.Mamaghani.ea2024} and \cite{Gonzalez.Sweatland.ea2025} into similar intrinsic properties of the control bundle. Over local trivializations, the map $\phi$ restricts intrinsic properties locally, and its inverse $\phi^{-1}$ lifts local properties back to the bundle. The following section uses these maps to construct a continuous fiber-valued multifunction $K_{\mathrm{CBF}}(x)$ in \eqref{eq: K_CBF} and controller $\kappa^*$ in \eqref{eq:kappa star}.
For local convexity, we define convexity over fibers. 
\begin{definition}
    $C_u \subset\mathcal{U} $  is fiberwise convex if $\Psi(x)\triangleq C_u\cap\,\mathcal{U}_x$ is a convex set in $\mathcal{U}_x$ for every $x\in\Pi(C_u)$. 
\end{definition}

The following lemma constructs local trivializations that yield fiberwise regularity for continuity, fiberwise convexity, and local precompactness for rank-$k$ vector bundles. 

\begin{lemma}
\label{lem:Local_invariants}
    Let $(E,\pi,\mathcal{M})$ be a rank-$k$ vector bundle, and let $U\subset\mathcal{M}$ be open. Let $\{(U_\alpha,\phi_\alpha)\}_{\alpha\in A}$ be a bundle atlas of $E$ where $\phi_\alpha:\pi^{-1}(U_\alpha)\rightarrow U_{\alpha}\times\mathbb{R}^k$ is a local trivialization with transition functions $g_{\alpha\beta}:U_\alpha\cap U_\beta \rightarrow \mathrm{GL}(k,\mathbb{R})$. Let $\Psi:U\rightrightarrows \left. E \right|_{U}$ be a fiber-valued multifunction such that $\Psi(x)\subset E_x$ for $x\in U$, and for each $\alpha\in A$, let $\Psi_\alpha : U_\alpha\cap U\rightrightarrows\mathbb{R}^k$ denote the local representative of $\Psi$ defined by $\Psi_\alpha(x) \triangleq \mathrm{pr}_2 \circ \phi_\alpha(\Psi(x))$. The following conditions hold for one (equivalently every) $\alpha$ with $x\in U_\alpha$: 
    \begin{enumerate}[(i)]
        \item $\Psi$ is osc, isc, or continuous at $x\in U$ if and only if  $\Psi_\alpha$ is osc, isc, or continuous at $x$, respectively;
        \item $\Psi$ is fiberwise convex for every $x\in U$ if and only if  $\Psi_\alpha(x)\subset \mathbb{R}^k$ is convex; 
        \item $\Psi$ is locally precompact at $x\in U$, if and only if  $\Psi_\alpha$ is locally bounded at $x$;
    \end{enumerate}
\end{lemma}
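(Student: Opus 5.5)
The plan is to transport every property through the local trivialization $\phi_\alpha$, which is a homeomorphism of $\pi^{-1}(U_\alpha)$ onto $U_\alpha\times\mathbb{R}^k$ that is linear on fibers. For $x\in U_\alpha\cap U$, $\Psi$ and $\Psi_\alpha$ are related by
\[
\phi_\alpha(\Psi(z))=\{z\}\times\Psi_\alpha(z).
\]
Hence the graph of $\Psi$ restricted to $U_\alpha\cap U$ is carried by the homeomorphism $\mathrm{id}\times\phi_\alpha$ onto the graph of $\Psi_\alpha$. The \emph{one (equivalently every)} clause comes afterwards. If $x\in U_\alpha\cap U_\beta$, then on the overlap $\Psi_\alpha(z)=g_{\alpha\beta}(z)\Psi_\beta(z)$, and $g_{\alpha\beta}$ is continuous into $\mathrm{GL}(k,\mathbb{R})$. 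So it suffices to show that each of the three properties of $\Psi_\beta$ at $x$ is preserved under pointwise multiplication by a continuous invertible-matrix field. It is also enough to prove that $\Psi$ has the property iff $\Psi_\alpha$ does, for a single $\alpha$, since applying this to each $\alpha$ gives the equivalence among charts automatically.

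For (i), I would use the sequential characterizations of Painlev\'e--Kuratowski limits from \cite{Rockafellar.Wets2009}. Osc at $x$ means: whenever $z_n\to x$, $e_n\in\Psi(z_n)$ and $e_n\to e$, then $e\in\Psi(x)$. Isc means: for every $e\in\Psi(x)$ and every $z_n\to x$, there exist $e_n\in\Psi(z_n)$ with $e_n\to e$. Since $\phi_\alpha$ is a homeomorphism with $\mathrm{pr}_1\circ\phi_\alpha=\pi$, a sequence $e_n\in E_{z_n}$ converges to $e\in E_x$ iff $\mathrm{pr}_2\phi_\alpha(e_n)\to\mathrm{pr}_2\phi_\alpha(e)$ while $z_n\to x$. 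Both characterizations therefore transfer verbatim, and continuity follows as the conjunction of the two.

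For (ii), the restriction $\mathrm{pr}_2\circ\phi_\alpha|_{E_x}:E_x\to\mathbb{R}^k$ is a linear isomorphism. Linear isomorphisms preserve convexity in both directions, so $\Psi(x)$ is convex in $E_x$ iff $\Psi_\alpha(x)$ is convex, for every $x\in U_\alpha\cap U$.

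For (iii), local precompactness of $\Psi$ at $x$ means there is a neighborhood $V$ of $x$ such that $\bigcup_{z\in V}\Psi(z)$ has compact closure in $E$. I expect this step to be the main obstacle, because compactness lives in the total space $E$ while boundedness lives in the fiber coordinate $\mathbb{R}^k$. I would shrink $V$ to be precompact with $\bar V\subset U_\alpha$; this is possible because $\mathcal{M}$ is locally compact Hausdorff. Then $\phi_\alpha$ maps the set $\bigcup_{z\in V}\Psi(z)$ into $V\times\bigcup_{z\in V}\Psi_\alpha(z)$.

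In one direction, suppose $\Psi_\alpha$ is bounded on $V$. Then the image lies in $\bar V\times \bar B_R$, which is compact. Its preimage under the homeomorphism is a compact subset of $\pi^{-1}(U_\alpha)$ and hence of $E$, so $\Psi$ is locally precompact at $x$.

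In the other direction, suppose the closure $K$ of $\bigcup_{z\in V}\Psi(z)$ is compact. I would shrink further to $W\ni x$ with $\bar W\subset V$ compact. Then $K\cap\pi^{-1}(\bar W)$ is a compact subset contained in $\pi^{-1}(U_\alpha)$. Its image under $\mathrm{pr}_2\circ\phi_\alpha$ is compact, hence bounded, which bounds $\Psi_\alpha$ on $W$.

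The chart-independence of (iii) then also follows directly, since $g_{\alpha\beta}$ is bounded on compact neighborhoods.
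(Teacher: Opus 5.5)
Your proposal is correct and follows essentially the same route as the paper's appendix proof: osc/isc are transported sequentially through the homeomorphism $\phi_\alpha$, convexity through the fiberwise linear isomorphism, and precompactness versus local boundedness through Heine--Borel on $\bar V\times\bar B_R$ with $\bar V\subset U_\alpha$ compact. Two of your steps are small tightenings of the paper's argument rather than a different approach: you note that chart-independence follows automatically from the per-chart equivalence with the intrinsic $\Psi$, so the paper's separate $g_{\alpha\beta}$ argument is unnecessary, and in the converse of (iii) you shrink the neighborhood so that the compact closure genuinely lies in $\pi^{-1}(U_\alpha)$.
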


\begin{proof}
See Appendix A. \end{proof}

For the feedback law to be well-defined, we present the following lemma adapted from \cite[Lem.~4.1]{Gonzalez.Sweatland.ea2025} to establish the local chart regularity properties.

\begin{lemma}
 \label{lem: Continuity locally for fiber-valued multifunctions} 
    Let $\left(\mathcal{U},\pi,\mathcal{M}\right)$ be a rank-$m$ control
bundle. Let $\{(U_{\alpha},\phi_{\alpha})\}_{\alpha\in A}$ be a bundle
atlas, where $U_{\alpha}\subset\mathcal{M}$ is a trivializing neighborhood
and $\phi_{\alpha}:\left.\mathcal{U}\right|_{U_{\alpha}}\to U_{\alpha}\times\mathbb{R}^{m}$
is a local trivialization. The local representative of the admissible
flow set is defined by $C_{u,\alpha}\triangleq\phi_{\alpha}(C_{u}\cap\left.\mathcal{U}\right|_{U_{\alpha}})\subset\Pi(C_{u})_{\alpha}\times\mathbb{R}^{m}$, where $\Pi(C_{u})_{\alpha}\triangleq U_{\alpha}\cap\Pi(C_{u})\subset\mathbb{R}^{n}$.
For each $x\in\Pi(C_{u})_{\alpha}$, let  $\Psi_{\alpha}(x)\triangleq\{u\in\mathbb{R}^{m}\mid(x,u)\in C_{u,\alpha}\}$. The local representative of $K_{\mathrm{CBF}}$ is a fiber-valued multifunction $K_{\mathrm{CBF},\alpha}:\Pi(C_u)_\alpha \rightrightarrows \mathbb{R}^m$ defined by
\begin{equation}
    K_{\mathrm{CBF},\alpha}(x)\triangleq\{u\in \Psi_{\alpha}(x) \mid \eta_{i,\alpha}(x,u)\leq0,\,\forall i\in[v]\},
\end{equation}
where $\eta_{i,\alpha}(x,u)\triangleq \Upsilon_{i,\alpha}(x,u)+\gamma_{i,\alpha}(x)$. Suppose that the following conditions hold for each $\alpha \in A$ and every $x\in\Pi(C_u)_\alpha$: the set $C_{u,\alpha}$ is closed and fiberwise convex, for each $i\in[v]$, $(x,u)\mapsto \eta_{i,\alpha}(x,u)$ is continuous, $u\mapsto \eta_{i,\alpha}(x,u)$ is convex, and there exists $u\in \Psi_{\alpha}(x)$ such that $\eta_{i,\alpha}(x,u)<0$. Then $K_{\mathrm{CBF},\alpha}(x)$ is continuous on $\Pi(C_u)_\alpha$. Further, by Lemma \ref{lem:Local_invariants}, the intrinsic fiber-valued multifunction $K_{\mathrm{CBF}}(x)$ is continuous on $\Pi(C_u)$.
\end{lemma}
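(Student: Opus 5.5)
The plan is to prove continuity of $K_{\mathrm{CBF},\alpha}$ chartwise, using the standard Euclidean argument of \cite[Lem.~4.1]{Gonzalez.Sweatland.ea2025}. Continuity of the intrinsic map $K_{\mathrm{CBF}}$ then follows by transporting the result through Lemma~\ref{lem:Local_invariants}(i). Fix $\alpha$ and work entirely in the trivialized picture $\Pi(C_u)_\alpha\times\mathbb{R}^m$. There $K_{\mathrm{CBF},\alpha}$ is the intersection of the closed-graph map $\Psi_\alpha$ with the sublevel sets $\{(x,u)\mid \eta_{i,\alpha}(x,u)\le 0\}$. We establish outer and inner semicontinuity separately.

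\emph{Outer semicontinuity.} Since $C_{u,\alpha}$ is closed, the graph of $\Psi_\alpha$ over $\Pi(C_u)_\alpha$ is closed relative to $\Pi(C_u)_\alpha\times\mathbb{R}^m$. Each $\eta_{i,\alpha}$ is jointly continuous, so the set $\{(x,u)\mid\eta_{i,\alpha}(x,u)\le0\}$ is closed. The graph of $K_{\mathrm{CBF},\alpha}$ is a finite intersection of relatively closed sets and is therefore closed, which is equivalent to osc \cite{Rockafellar.Wets2009}. Concretely, take $x_n\to x$ and $u_n\in K_{\mathrm{CBF},\alpha}(x_n)$ with $u_n\to u$. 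Then $(x,u)\in C_{u,\alpha}$ by closedness, and $\eta_{i,\alpha}(x,u)=\lim\eta_{i,\alpha}(x_n,u_n)\le0$.

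\emph{Inner semicontinuity.} This is the main step, and it is where the Slater-type condition enters. Fix $x$, $\bar u\in K_{\mathrm{CBF},\alpha}(x)$, and a sequence $x_n\to x$. Let $\hat u\in\Psi_\alpha(x)$ be a strictly feasible point with $\eta_{i,\alpha}(x,\hat u)<0$ for all $i$; if the hypothesis gives one such point per $i$, we average them, using convexity of $\Psi_\alpha(x)$ and of each $u\mapsto\eta_{i,\alpha}(x,u)$. For $\lambda\in(0,1]$ set $u^\lambda=(1-\lambda)\bar u+\lambda\hat u$. By convexity $u^\lambda\in\Psi_\alpha(x)$ and $\eta_{i,\alpha}(x,u^\lambda)\le(1-\lambda)\eta_{i,\alpha}(x,\bar u)+\lambda\eta_{i,\alpha}(x,\hat u)<0$.

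We then need points of $\Psi_\alpha(x_n)$ converging to $u^\lambda$. This requires isc of $\Psi_\alpha$ itself, which closedness and convexity of $C_{u,\alpha}$ alone do not supply. This is the expected obstacle. The first remedy to try is to read it off the hypothesis as adapted from \cite{Gonzalez.Sweatland.ea2025}, interpreting the assumption ``there exists $u\in\Psi_\alpha(x)$'' at every $x$ together with the fiberwise-convex closed structure as giving isc of $\Psi_\alpha$. Failing that, we would add isc of $\Psi_\alpha$ as an implicit standing assumption, which is standard in that lemma.

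Given $w_n\in\Psi_\alpha(x_n)$ with $w_n\to u^\lambda$, joint continuity of $\eta_{i,\alpha}$ gives $\eta_{i,\alpha}(x_n,w_n)<0$ for all large $n$. Hence $w_n\in K_{\mathrm{CBF},\alpha}(x_n)$ eventually. A diagonal argument with $\lambda=1/k\to0$ produces a sequence in $K_{\mathrm{CBF},\alpha}(x_n)$ converging to $\bar u$, which proves isc.

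Combining osc and isc, $K_{\mathrm{CBF},\alpha}$ is continuous on $\Pi(C_u)_\alpha$. Since $K_{\mathrm{CBF},\alpha}(x)=\mathrm{pr}_2\circ\phi_\alpha(K_{\mathrm{CBF}}(x))$, with $\Upsilon_{i,\alpha},\gamma_{i,\alpha}$ the coordinate expressions of the intrinsic quantities, Lemma~\ref{lem:Local_invariants}(i) gives continuity of $K_{\mathrm{CBF}}$ at every $x\in U_\alpha\cap\Pi(C_u)$. The charts $U_\alpha$ cover $\mathcal{M}$, and continuity is a pointwise property, so $K_{\mathrm{CBF}}$ is continuous on all of $\Pi(C_u)$. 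This final transfer is independent of the choice of $\alpha$, because the transition functions $g_{\alpha\beta}$ are continuous and invertible.
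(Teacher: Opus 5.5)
Your architecture matches the paper's. You establish continuity of $K_{\mathrm{CBF},\alpha}$ chartwise via the Euclidean argument of \cite[Lem.~4.1]{Gonzalez.Sweatland.ea2025}, then transfer it through Lemma~\ref{lem:Local_invariants}(i). The paper only cites that lemma, asserting its hypotheses are met; you unpack it. Your osc argument, the convex-combination step and the diagonal argument are correct.

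The obstacle you flag is a genuine gap, and your first remedy cannot close it: without a lower-semicontinuity hypothesis on $\Psi_\alpha$, the statement as written is false. Take $\mathcal{M}=\mathbb{R}$ with the trivial line bundle, $C_u=(\mathbb{R}\times\{0\})\cup(\{0\}\times[0,1])$, $v=1$ and $\eta\equiv-1$ (realized, e.g., by $F(x)=\{0\}$, $G=0$, $\gamma\equiv-1$). Then:
\begin{itemize}
\item $C_u$ is closed and fiberwise convex, and $\Pi(C_u)=\mathbb{R}$;
\item $\eta$ is continuous and convex in $u$;
\item $u=0\in\Psi(x)$ is strictly feasible at every $x$.
\end{itemize}
Yet $K_{\mathrm{CBF}}=\Psi$ with $\Psi(0)=[0,1]$ and $\Psi(x)=\{0\}$ for $x\neq0$. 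This map is not isc at $0$, since $1\in K_{\mathrm{CBF}}(0)$ is not a limit of points of $K_{\mathrm{CBF}}(1/n)=\{0\}$. So isc cannot be read off the hypotheses. You must take your second option and state isc of $\Psi_\alpha$ explicitly as a hypothesis (equivalently, by Lemma~\ref{lem:Local_invariants}(i), isc of the intrinsic $\Psi$). The paper's proof skips over this by asserting that the cited lemma's hypotheses are satisfied. With that hypothesis added, your argument is a complete proof.

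There is a second, smaller error. Averaging per-index Slater points does not produce a common one, because $\eta_{i,\alpha}(x,u_j)$ for $j\neq i$ is uncontrolled. Under the per-index reading the conclusion fails even with $\Psi_\alpha\equiv\mathbb{R}$. Take $\eta_1(x,u)=u$ and $\eta_2(x,u)=x-u$; each is strictly satisfiable at every $x$. But $K_{\mathrm{CBF},\alpha}(x)=\{u\mid x\le u\le 0\}$ is empty for $x>0$, so it is not isc at $0$. Read the hypothesis as requiring a single $\hat u\in\Psi_\alpha(x)$ with $\eta_{i,\alpha}(x,\hat u)<0$ for all $i\in[v]$ simultaneously, and delete the averaging remedy.
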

\begin{proof}
Let $x\in\Pi(C_{u})$ be fixed. Since $\{U_{\alpha}\}_{\alpha\in A}$
covers $\Pi(C_{u})$, there exists an $\alpha\in A$ such that $x\in U_{\alpha}$.
Under the stated closedness, convexity, continuity, and strict-feasibility conditions, the hypotheses of \cite[Lem.~4.1]{Gonzalez.Sweatland.ea2025} are satisfied, so $K_{\mathrm{CBF},\alpha}(x)$ is osc, isc and therefore continuous at every $x\in\Pi(C_{u})_{\alpha}$ by \cite[Def. 5.4]{Rockafellar.Wets2009}. Since $x\in\Pi(C_{u})$ was arbitrary, $K_{\mathrm{CBF},\alpha}(x)$ is continuous on $\Pi(C_u)_\alpha$, and by Lemma \ref{lem:Local_invariants}, continuity is invariant under local trivialization, so $K_{\mathrm{CBF}}(x)$ is continuous on $\Pi(C_u)$.
\end{proof}

Lemma \ref{lem: Continuity locally for fiber-valued multifunctions} recovers conditions where $K_\mathrm{CBF}(x)\subset \mathcal{U}$ is continuous intrinsically. 
The local representative of the control law $\kappa_\alpha^*:\Pi(C_u)_\alpha \rightarrow\mathbb{R}^m$ in \eqref{eq:kappa star} is defined as  
\begin{equation}
    \kappa_\alpha^*(x) \triangleq 
    {\mathrm{arg\,min}_{u \in K_{\mathrm{CBF},\alpha}(x)}}\; Q_\alpha(x,u),
    \label{eq: local kappa star}
\end{equation}
where $Q_\alpha: C_{u,\alpha}\rightarrow \mathbb{R}$   denotes the local representative of the cost function under the trivialization $\phi_\alpha$. 
The following theorem is a modified vector-bundle analogue of \cite[Thm. 4.3]{Gonzalez.Sweatland.ea2025} for the single-valuedness and continuity of  $\kappa_\alpha^{*}$.

\begin{theorem}
\label{thm:Local_CBF}
    Let $U_{\alpha} \subset\mathcal{M}$ be a trivializing neighborhood and let $\phi_\alpha:\left. \mathcal{U} \right|_{U_\alpha} \rightarrow U_\alpha \times \mathbb{R}^m$ be a local trivialization. The local representative of the admissible flow set is defined by $C_{u,\alpha} \triangleq \phi_\alpha (C_u \cap \left. \mathcal{U} \right|_{U_\alpha}) \subset \Pi(C_u)_\alpha \times \mathbb{R}^m$, where $\Pi(C_u)_\alpha \triangleq U_\alpha \cap \Pi(C_u) \subset\mathbb{R}^n$. For each $x\in\Pi(C_u)_\alpha$, let 
    $\Psi_\alpha(x)\triangleq\{u\in\mathbb{R}^m\mid (x,u)\in C_{u,\alpha} \}$. Let $\Upsilon_\alpha:C_{u,\alpha}\rightarrow\mathbb{R}^v$, $\gamma_\alpha:\Pi(C_u)_\alpha\rightarrow \mathbb{R}^v$, and $Q_\alpha:C_{u,\alpha}\rightarrow\mathbb{R}$ denote the local representatives of the worst-case growth function, performance function, and objective function, respectively, under the trivialization. Suppose that $C_u$ is closed in $\mathcal{U}$ and $\Psi(x)$ is locally precompact relative to $\Pi(C_u)$ and for $i\in[v]$, the following conditions hold 
\begin{enumerate}
    \item $(x,u) \mapsto \Upsilon_{i,\alpha}(x,u)+\gamma_{i,\alpha}(x)$ is continuous on $C_{u,\alpha}$;
    \item for every $x\in \Pi(C_u)_\alpha$, $u\mapsto \Upsilon_{i,\alpha} (x,u) + \gamma_{i,\alpha}(x)$ is convex on $K_{\mathrm{CBF},\alpha}(x)$ and $\Psi_\alpha(x)$ is convex in $\mathbb{R}^m$;
    \item there exists $u\in \Psi_\alpha(x)$ such that $\Upsilon_{i,\alpha}(x,u)+\gamma_{i,\alpha}(x)<0$ for every $x\in\Pi(C_u)_\alpha$; 
    \item the objective function $Q_\alpha:C_{u,\alpha}\rightarrow \mathbb{R}$  
    is continuous and, for every $x\in \Pi(C_u)_\alpha$, $u \mapsto Q_\alpha(x,u)$ is strictly convex on $K_{\mathrm{CBF},\alpha}(x)$. 
\end{enumerate}
Under the stated conditions, the local representative of $\kappa_\alpha^{*}$ in \eqref{eq: local kappa star} is single-valued and continuous on $\Pi(C_u)_\alpha$. 
\end{theorem}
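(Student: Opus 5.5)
The plan is to reduce the statement to a parametric optimization problem on the Euclidean chart $\Pi(C_u)_\alpha\times\mathbb{R}^m$ and apply Berge-type continuity results from \cite{Rockafellar.Wets2009}, following \cite[Thm.~4.3]{Gonzalez.Sweatland.ea2025}.

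First, I will transfer the hypotheses to the chart. Because $\phi_\alpha$ is a diffeomorphism onto $U_\alpha\times\mathbb{R}^m$ and $C_u$ is closed in $\mathcal{U}$, the set $C_{u,\alpha}$ is closed relative to $\Pi(C_u)_\alpha\times\mathbb{R}^m$. By Lemma~\ref{lem:Local_invariants}(iii), local precompactness of $\Psi$ is equivalent to local boundedness of $\Psi_\alpha$. Conditions 1--3, together with closedness and convexity of $\Psi_\alpha(x)$, are exactly the hypotheses of Lemma~\ref{lem: Continuity locally for fiber-valued multifunctions}. That lemma gives that $K_{\mathrm{CBF},\alpha}$ is continuous (osc and isc) on $\Pi(C_u)_\alpha$. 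Since $K_{\mathrm{CBF},\alpha}(x)\subset\Psi_\alpha(x)$, it is also locally bounded. Each $K_{\mathrm{CBF},\alpha}(x)$ is nonempty by condition 3. It is convex, being the intersection of the convex set $\Psi_\alpha(x)$ with sublevel sets of convex functions. It is closed, since $\Psi_\alpha(x)$ is closed and $\eta_{i,\alpha}(x,\cdot)$ is continuous.

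Second, I will establish existence and uniqueness at each fixed $x$. The set $K_{\mathrm{CBF},\alpha}(x)$ is nonempty and compact, and $Q_\alpha(x,\cdot)$ is continuous, so a minimizer exists by Weierstrass. Suppose $u_1\neq u_2$ were both minimizers. Their midpoint lies in the convex set $K_{\mathrm{CBF},\alpha}(x)$ and, by strict convexity (condition 4), has a strictly smaller cost, which is a contradiction. Hence $\kappa^*_\alpha(x)$ is a singleton.

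Third, I will prove continuity. I will apply \cite[Thm.~1.17]{Rockafellar.Wets2009} (parametric minimization) to $f(x,u)\triangleq Q_\alpha(x,u)+\delta_{K_{\mathrm{CBF},\alpha}(x)}(u)$. Equivalently, I will run a direct sequential argument, which I sketch here. Take $x_j\to x$ in $\Pi(C_u)_\alpha$ and set $u_j=\kappa^*_\alpha(x_j)$. Local boundedness gives a convergent subsequence $u_j\to\bar u$, and osc gives $\bar u\in K_{\mathrm{CBF},\alpha}(x)$. For optimality, take any $w\in K_{\mathrm{CBF},\alpha}(x)$. By isc there exist $w_j\in K_{\mathrm{CBF},\alpha}(x_j)$ with $w_j\to w$. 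Then $Q_\alpha(x_j,u_j)\le Q_\alpha(x_j,w_j)$, and passing to the limit using joint continuity of $Q_\alpha$ gives $Q_\alpha(x,\bar u)\le Q_\alpha(x,w)$. So $\bar u$ is a minimizer, and by uniqueness $\bar u=\kappa^*_\alpha(x)$. Since every subsequence has a further subsequence converging to the same limit, $\kappa^*_\alpha(x_j)\to\kappa^*_\alpha(x)$.

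The main obstacle is the first step: making sure the boundedness and closedness needed for compactness genuinely come from the intrinsic hypotheses. This requires that closedness of $C_u$ in $\mathcal{U}$ passes to relative closedness of $C_{u,\alpha}$, and that local precompactness of $\Psi$ transfers to local boundedness of $\Psi_\alpha$ uniformly near $x$, not just fiberwise. It also requires that isc of $K_{\mathrm{CBF},\alpha}$, obtained through the Slater-type condition 3, holds at every point of the chart. I will handle these by invoking Lemma~\ref{lem:Local_invariants} and Lemma~\ref{lem: Continuity locally for fiber-valued multifunctions} explicitly, and by noting that $\phi_\alpha$ is a homeomorphism that restricts to linear isomorphisms on fibers.
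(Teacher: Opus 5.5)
Your proposal is correct and follows essentially the same route as the paper: closedness of $C_{u,\alpha}$ (and hence of $\Psi_\alpha(x)$) via the homeomorphism $\phi_\alpha$, continuity of $K_{\mathrm{CBF},\alpha}$ from Lemma~\ref{lem: Continuity locally for fiber-valued multifunctions}, local boundedness inherited from the precompactness hypothesis, uniqueness from strict convexity, and then continuity of the minimizer. The only difference is that the paper asserts $\kappa^*_\alpha$ is osc and cites \cite[Cor.~5.20]{Rockafellar.Wets2009}, whereas your sequential (Berge-type) argument proves that step directly; you also supply the Weierstrass existence step that the paper leaves implicit.
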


\begin{proof}
The flow set $C_u\subset\mathcal{U}$ is closed and the coordinate map $\phi_\alpha$ is a homeomorphism, so the local representative $C_{u,\alpha}$ is closed in $\Pi(C_u)_\alpha \times \mathbb{R}^m$. 
Let $x\in \Pi(C_u)_\alpha$ be fixed and consider a sequence of $\{u_n \}_{n\in\mathbb{N}} \subset \Psi_\alpha(x) \subset \mathbb{R}^m$ with $(x,u_n)$ converging to $ (x,u)$ with $ u\in\mathbb{R}^m$. Since $u_n\in C_{u,\alpha}$ and $C_{u,\alpha}$ is closed, $(x,u)\in C_{u,\alpha}$. Hence, $u\in  \Psi_\alpha(x)$ and $\Psi_\alpha(x)$ is closed. With $\Psi_\alpha(x)$ being closed for every $x\in\Pi(C_u)_\alpha$ and given conditions 1)-3), the assumptions in Lemma \ref{lem: Continuity locally for fiber-valued multifunctions} are satisfied, and $K_{\mathrm{CBF},\alpha}(x)\subset\Psi_\alpha(x)$ is continuous. Further, $K_{\mathrm{CBF},\alpha}(x)\subset\Psi_\alpha(x)$ is locally bounded relative to $\Pi(C_u)_\alpha$ by assumption. From condition 4), for each $x\in\Pi(C_u)_\alpha$, $u\mapsto Q_\alpha(x,u)$ is continuous and strictly convex in $\{x\} \times K_{\mathrm{CBF},\alpha}(x) \subset \{x\} \times \mathbb{R}^m$. Since $Q_\alpha$ is continuous and $K_{\mathrm{CBF,}\alpha}$ is continuous and locally bounded at $x$, $\kappa_{\alpha}^*(x)$ is osc at $x$. The strict convexity of $u\mapsto Q_\alpha(x,u)$ on $K_{\mathrm{CBF},\alpha}(x)$ ensures uniqueness of the minimizer, so $\kappa_{\alpha}^*(x)$ is single-valued. By \cite[Cor. 5.20]{Rockafellar.Wets2009}, a locally bounded, osc single-valued map is continuous, so $\kappa_{\alpha}^*$ is continuous.
\end{proof}

With formalized fiberwise regularity, we establish the following assumption and lemma to regulate the fiberwise regularity conditions for the open-loop and closed-loop systems in \eqref{eq:System_1} and \eqref{eq: Closed_Loop_System}, respectively.

\begin{assumption}
\label{asm: Differential_inclusion_Axton}
The fiber-valued multifunction $\Phi:\mathcal{U} \rightrightarrows T\mathcal{M}$ is osc and locally precompact relative to $C_u$ with nonempty, fiberwise convex values and $C_u$ is closed in $\mathcal{U}$. 
\end{assumption}

\begin{lemma} For the closed-loop system $(\Phi_\mathrm{cl},C_\kappa)$ in \eqref{eq: Closed_Loop_System}, suppose $\kappa: \mathcal{M} \rightarrow \mathcal{U}$ and $\Pi (C_\kappa)=\{x\in\mathcal{M} \mid \kappa(x)\in \Psi(x) \}$ are such that $\kappa$ is continuous on $\Pi (C_\kappa)$. If Assumption \ref{asm: Differential_inclusion_Axton} holds, then the closed-loop fiber-valued multifunction  $\Phi_{\mathrm{cl}}:\mathcal{U} \rightrightarrows T\mathcal{M}$ is locally precompact relative to $C_\kappa$, osc, nonempty, and fiberwise convex for $C_\kappa$. Additionally, if Assumption \ref{asm: Differential_inclusion_Axton} holds, then $ C_\kappa$ is closed in $\mathcal{U}$. 
\end{lemma}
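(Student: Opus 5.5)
The plan is to reduce every claimed property to a statement in a single local trivialization, prove it there with Euclidean arguments, and transport it back to the bundle with Lemma~\ref{lem:Local_invariants}. This is the approach used for the closed-loop regularity lemma in \cite{Isaly.Mamaghani.ea2024} and \cite{Gonzalez.Sweatland.ea2025}. Concretely, I read $C_\kappa$ as the graph of $\kappa$ over the admissible states,
\[
C_\kappa=\{\kappa(x)\mid x\in\Pi(C_\kappa)\}=\kappa(\mathcal{M})\cap C_u .
\]
I then view $\Phi_{\mathrm{cl}}$ as the restriction of $\Phi$ to $C_\kappa$, so that $\Phi_{\mathrm{cl}}(\kappa(x))=\Phi(\kappa(x))=F(x)+G_x\kappa(x)$. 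Equivalently, $\Phi_{\mathrm{cl}}$ is the composition $x\mapsto\Phi(\kappa(x))$ on $\Pi(C_\kappa)$.

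\textbf{Nonemptiness and fiberwise convexity.} Both are pointwise properties. For $x\in\Pi(C_\kappa)$ we have $\kappa(x)\in\Psi(x)\subset C_u\subseteq\mathrm{dom}(\Phi)$, so $\Phi_{\mathrm{cl}}(\kappa(x))=\Phi(\kappa(x))$ is nonempty. It is also a convex subset of $T_x\mathcal{M}$, because Assumption~\ref{asm: Differential_inclusion_Axton} gives $\Phi$ convex values.

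\textbf{Outer semicontinuity and local precompactness.} Fix $x\in\Pi(C_\kappa)$ and pick charts $(U_\alpha,\phi_\alpha)$ of $\mathcal{U}$ and of $T\mathcal{M}$ around $x$. Take a sequence $x_n\to x$ in $\Pi(C_\kappa)$ and $w_n\in\Phi(\kappa(x_n))$ with $w_n\to w$. Continuity of $\kappa$ on $\Pi(C_\kappa)$ gives $\kappa(x_n)\to\kappa(x)$ in $\mathcal{U}$. Outer semicontinuity of $\Phi$ relative to $C_u$ then yields $w\in\Phi(\kappa(x))$, so $\Phi_{\mathrm{cl}}$ is osc; in local coordinates this is exactly the statement of Lemma~\ref{lem:Local_invariants}(i). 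For local precompactness, Assumption~\ref{asm: Differential_inclusion_Axton} gives a neighborhood $W$ of $\kappa(x)$ in $\mathcal{U}$ on which $\Phi(W\cap C_u)$ is precompact. By continuity, $V\triangleq\kappa^{-1}(W)$ is a neighborhood of $x$ relative to $\Pi(C_\kappa)$. Hence $\Phi_{\mathrm{cl}}(\kappa(V))\subset\Phi(W\cap C_u)$ is precompact. Through Lemma~\ref{lem:Local_invariants}(iii), this is the same as local boundedness of the local representative.

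\textbf{Closedness of $C_\kappa$ (expected main obstacle).} Take $\kappa(x_n)\in C_\kappa$ with $\kappa(x_n)\to e\in\mathcal{U}$. Continuity of $\pi$ gives $x_n\to x\triangleq\pi(e)$, and closedness of $C_u$ gives $e\in C_u$. The difficulty is showing $e=\kappa(x)$. This needs continuity of $\kappa$ at $x$, but $x$ is not yet known to lie in $\Pi(C_\kappa)$, and there $\kappa$ is only assumed continuous on $\Pi(C_\kappa)$.

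I will first try to settle this by interpreting the hypothesis as continuity of $\kappa$ on a neighborhood, or on all of $\mathcal{M}$, which is the setting in which the lemma is applied to the glued $\kappa^*$. With that reading, $\kappa(x_n)\to\kappa(x)$, so $e=\kappa(x)\in C_u\cap\mathcal{U}_x=\Psi(x)$. This gives $x\in\Pi(C_\kappa)$ and $e\in C_\kappa$.

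Equivalently, in a trivialization, $\phi_\alpha(C_\kappa)=\mathrm{graph}(\kappa_\alpha)\cap C_{u,\alpha}$. This is the intersection of the graph of a continuous map, which is closed, with a closed set, and Lemma~\ref{lem:Local_invariants} lifts the conclusion back to $\mathcal{U}$. If continuity is truly only available on $\Pi(C_\kappa)$, I would instead state the closedness relative to $\pi^{-1}(\Pi(C_\kappa))$. I expect this is where the argument needs the most care in stating hypotheses.
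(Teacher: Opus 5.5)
Although you announce a chart-by-chart plan, what you actually carry out is an intrinsic argument, which is a genuinely different route from the paper's. The paper fixes a chart and transports Assumption~\ref{asm: Differential_inclusion_Axton} to the local representative $\Phi_\alpha$ via Lemma~\ref{lem:Local_invariants}. It then applies the Euclidean closed-loop result \cite[Lem.~2]{Isaly.Mamaghani.ea2024} to get osc, local boundedness, nonempty convex values, and closedness of $C_{\kappa,\alpha}$, and lifts back using that closedness is local over $\{U_\alpha\}$. You instead work directly with sequences in $\mathcal{U}$ and $T\mathcal{M}$, continuity of $\pi$, closedness of $C_u$, and Hausdorffness.

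Your route is self-contained. It also avoids pushing $\Phi$, a multifunction on the total space $\mathcal{U}$ with values in $T\mathcal{M}$, through Lemma~\ref{lem:Local_invariants}, which is stated for multifunctions from the base into $E|_U$; a chart argument would need trivializations of both $\mathcal{U}$ and $T\mathcal{M}$, as you noticed. It also shows where the real content is. If $\Phi_{\mathrm{cl}}$ is read as $\Phi|_{C_\kappa}$ on $\mathcal{U}$, then nonemptiness, convexity, and osc and local precompactness relative to $C_\kappa$ all follow from $C_\kappa\subset C_u$ without using continuity of $\kappa$. Continuity is only needed for the $x$-parametrized view and for closedness. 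The paper's route is shorter because it reuses a known Euclidean lemma, but it silently inherits that lemma's hypotheses.

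Your worry about closedness is justified: as stated, the lemma is false. Take the trivial line bundle $\mathcal{U}=\mathbb{R}\times\mathbb{R}$ over $\mathcal{M}=\mathbb{R}$, with $C_u=\mathbb{R}\times[0,1]$, $\Phi(x,u)=\{u\}$, and $\kappa(x)=(x,0)$ for $x\neq 0$, $\kappa(0)=(0,2)$. Assumption~\ref{asm: Differential_inclusion_Axton} holds, $\Pi(C_\kappa)=\mathbb{R}\setminus\{0\}$, and $\kappa$ is continuous there. Yet $C_\kappa=(\mathbb{R}\setminus\{0\})\times\{0\}$ is not closed. Moreover, $\Phi_{\mathrm{cl}}$, extended by $\emptyset$ off $C_\kappa$, is not osc at $(0,0)$, so the unqualified osc claim depends on the same point. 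The paper's proof does not address this.

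Your repair works, but it can be sharpened. In your sequence argument, $e\in C_u$ already forces $x=\pi(e)\in\Pi(C_u)$. So continuity of $\kappa$ on $\Pi(C_u)$ is enough to get $e=\kappa(x)$, hence $e\in C_\kappa$. That is exactly what Theorem~\ref{thm: Global_section} provides for the glued $\kappa^*$, which is only defined on $\Pi(C_u)$. Continuity on a neighborhood or on all of $\mathcal{M}$ is more than you need, and more than that application supplies.
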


\begin{proof}
Let $\{(U_\alpha,\phi_\alpha)\}_{\alpha\in A}$ be a bundle atlas for $\mathcal{U}$, fix any chart $(U_\alpha,\phi_\alpha)$, and let $\Pi(C_\kappa)_\alpha \triangleq U_\alpha\cap \Pi(C_\kappa)$. The local representative of $\kappa$ is given by $\kappa^*_\alpha\triangleq\mathrm{pr}_2 \circ \phi_\alpha \circ \kappa|_{\Pi(C_\kappa)_\alpha}$ where $\kappa^*_\alpha:\Pi(C_\kappa)_\alpha\to\mathbb{R}^m$. Let $C_u$ be closed in $\mathcal{U}$. Since $\phi_\alpha$ is a homeomorphism, $\kappa^*_\alpha$ is continuous on $\Pi(C_\kappa)_\alpha$ and $C_{u,\alpha}$ is closed in $\Pi(C_u)_\alpha \times \mathbb{R}^m$.  Further, by Lemma \ref{lem:Local_invariants}, the open-loop dynamics satisfying Assumption \ref{asm: Differential_inclusion_Axton} restricts to the local representative: $\Phi_\alpha$ is osc, locally bounded, nonempty, and convex on $C_{u,\alpha} \subset \Pi(C_u)_\alpha \times \mathbb{R}^m$. By \cite[Lem. ~2]{Isaly.Mamaghani.ea2024} applied to the local representative $\Phi_\alpha$ on $\Pi(C_u)_\alpha\times \mathbb{R}^m$, the local closed-loop multifunction $\Phi_{\mathrm{cl},\alpha}(x)=\Phi_\alpha(x,\kappa^*_\alpha(x))$ is osc, locally bounded, nonempty and convex on $C_{\kappa,\alpha}$. 
Since this holds for every $\alpha\in A$, Lemma \ref{lem:Local_invariants} lifts these properties back intrinsically: $\Phi_{\mathrm{cl}}$ is osc, locally precompact relative to $C_\kappa$, nonempty, and fiberwise convex on $C_\kappa$. 
Additionally, by \cite[Lem.~2]{Isaly.Mamaghani.ea2024} applied to the local representative $C_{u,\alpha}$, the closed-loop $C_{\kappa,\alpha}$ is closed. Since $\phi_\alpha^{-1}$ is 
a homeomorphism, $C_\kappa \cap \mathcal{U}|_{U_\alpha}$ is closed in 
$\mathcal{U}|_{U_\alpha}$ for every $\alpha \in A$. Since 
$\{U_\alpha\}_{\alpha \in A}$ covers $\mathcal{M}$ and closedness is local, $C_\kappa$ is closed in $\mathcal{U}$.
\end{proof}

\subsection{Sheaf Compatibility Conditions}
\label{sec: Sheaf Compatibility Conditions}

From Section \ref{sec: Fiberwise_Regularity}, we recover fiberwise regularity conditions on the multifunction $K_{\mathrm{CBF}}(x)$.
Under these conditions, 
compatible local controllers that glue determine a unique global section.   This is formalized in Definition \ref{def: Sheaf_CBF} and Theorem \ref{thm: Global_section}, which establishes the existence of a global continuous feedback law obtained by gluing local solutions.

\begin{definition}
\label{def: Sheaf_CBF}
    Let $\{(U_\alpha,\phi_\alpha)\}_{\alpha \in A}$ be a bundle atlas and cover of $\Pi(C_u)$ for $\alpha\in A$. A continuously differentiable CBF candidate $B:\mathcal{M}\rightarrow\mathbb{R}^v$ is a \textit{Sheaf Control Barrier Function} (Sheaf-CBF) for \eqref{eq:System_1} and $\mathcal{S}$ on $\mathcal{P}$ if:
    \begin{enumerate}
        \item $B$ is a CBF in the sense of Definition \ref{def:CBF_on_a_bundle} such that $K_{\mathrm{CBF}}(x)\neq\emptyset$ for all $x\in\mathcal{P}\subset\Pi(C_u);$
        \item The conditions of Theorem \ref{thm:Local_CBF} hold on each $U_\alpha$, so that $\kappa^*_\alpha \in \Gamma(U_\alpha,\mathcal{U})$ is single-valued and continuous with $\kappa^*_\alpha(x)\in K_{\mathrm{CBF},\alpha}(x)$ for all $x\in \Pi(C_u)_\alpha;$ 
        \item The local sections are compatible on overlapping trivializations such that, for all $\alpha,\beta \in A$ and all $x\in U_\alpha \cap U_\beta$, the restriction condition in \eqref{eq: Restriction Compatibility condition} is satisfied for the local sections of $\mathcal{U}$, and equivalently in local coordinates, the gluing condition is satisfied:
        \begin{equation}
                    \kappa^*_{\alpha}(x) = g_{\alpha\beta}(x)\, \kappa^*_{\beta}(x), \quad \forall x \in U_\alpha \cap U_\beta.
                    \label{eq: Gluing Compatibility condition}
        \end{equation}
    \end{enumerate}
\end{definition}

\begin{theorem}
\label{thm: Global_section}
    Let $B:\mathcal{M}\rightarrow\mathbb{R}^v$ be a Sheaf-CBF for \eqref{eq:System_1} and let $\{(U_\alpha,\phi_\alpha)\}_{\alpha \in A}$ be a bundle atlas covering $\Pi(C_u)$. A collection of compatible local sections $\{\kappa^*_\alpha \} $ defined on the coverage uniquely determines a global section 
    \begin{equation}
        \kappa^*\in\ \mathcal{K}_{\mathrm{CBF}}(\Pi(C_u))\subseteq \Gamma(\Pi(C_u),\mathcal{U}).
    \end{equation}
\end{theorem}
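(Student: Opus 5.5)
The plan is to lift each local representative $\kappa^*_\alpha$ to an intrinsic local section, check that these lifts agree on overlaps via the gluing condition \eqref{eq: Gluing Compatibility condition}, and then invoke the sheaf property of $\mathcal{F}$ from Section~\ref{sec:Sheaf Compatibility}.

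\textbf{Step 1: lift to intrinsic sections.} For each $\alpha\in A$, define
\begin{equation*}
\hat\kappa_\alpha(x)\triangleq\phi_\alpha^{-1}\bigl(x,\kappa^*_\alpha(x)\bigr),\quad x\in\Pi(C_u)_\alpha .
\end{equation*}
By item 2) of Definition~\ref{def: Sheaf_CBF} and Theorem~\ref{thm:Local_CBF}, $\kappa^*_\alpha$ is single-valued and continuous. Since $\phi_\alpha^{-1}$ is a homeomorphism satisfying $\pi\circ\phi_\alpha^{-1}=\mathrm{pr}_1$, the map $\hat\kappa_\alpha$ is a continuous local section, so $\hat\kappa_\alpha\in\mathcal{F}(\Pi(C_u)_\alpha)$. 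Moreover, $\kappa^*_\alpha(x)\in K_{\mathrm{CBF},\alpha}(x)=\mathrm{pr}_2\circ\phi_\alpha(K_{\mathrm{CBF}}(x))$, and $\phi_\alpha$ is a bijection on each fiber. Hence $\hat\kappa_\alpha(x)\in K_{\mathrm{CBF}}(x)$, which gives $\hat\kappa_\alpha\in\mathcal{K}_{\mathrm{CBF}}(\Pi(C_u)_\alpha)$.

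\textbf{Step 2: compatibility on overlaps.} Let $x\in U_\alpha\cap U_\beta\cap\Pi(C_u)$. Using $\phi_\alpha\circ\phi_\beta^{-1}(x,v)=(x,g_{\alpha\beta}(x)v)$ together with the gluing condition $\kappa^*_\alpha=g_{\alpha\beta}\kappa^*_\beta$, we compute
\begin{equation*}
\phi_\alpha(\hat\kappa_\beta(x))=\bigl(x,g_{\alpha\beta}(x)\kappa^*_\beta(x)\bigr)=\bigl(x,\kappa^*_\alpha(x)\bigr)=\phi_\alpha(\hat\kappa_\alpha(x)).
\end{equation*}
Since $\phi_\alpha$ is injective, $\hat\kappa_\alpha=\hat\kappa_\beta$ on the overlap. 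This is the restriction condition \eqref{eq: Restriction Compatibility condition}.

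\textbf{Step 3: glue.} The sets $\{\Pi(C_u)_\alpha\}_{\alpha\in A}$ cover $\Pi(C_u)$. The sheaf property of $\mathcal{F}$ therefore yields a unique continuous section $\kappa^*\in\mathcal{F}(\Pi(C_u))$ with $\kappa^*|_{\Pi(C_u)_\alpha}=\hat\kappa_\alpha$. Membership in $K_{\mathrm{CBF}}(x)$ is a pointwise condition, and each $x$ lies in some $U_\alpha$, so Step 1 gives $\kappa^*(x)\in K_{\mathrm{CBF}}(x)$. Hence $\kappa^*\in\mathcal{K}_{\mathrm{CBF}}(\Pi(C_u))\subseteq\Gamma(\Pi(C_u),\mathcal{U})$.

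Uniqueness holds because any two sections with the same restrictions $\hat\kappa_\alpha$ agree at every point of the cover. Finally, the glued $\kappa^*$ coincides pointwise with the intrinsic optimizer \eqref{eq:kappa star}. This follows from two facts: $Q_\alpha$ is the local representative of $Q$, and $\phi_\alpha$ is a fiberwise bijection between $K_{\mathrm{CBF}}(x)$ and $K_{\mathrm{CBF},\alpha}(x)$. Consequently the unique minimizer over $K_{\mathrm{CBF},\alpha}(x)$ is the coordinate image of the unique minimizer over $K_{\mathrm{CBF}}(x)$.

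\textbf{Main obstacle.} The difficulty is not the gluing itself but the domain bookkeeping. The $\kappa^*_\alpha$ are defined only on $\Pi(C_u)_\alpha$, while the sheaf $\mathcal{F}$ is stated for open sets. I would handle this in one of two ways. The first option is to note that the gluing argument of Section~\ref{sec:Sheaf Compatibility} uses only pointwise agreement and the local nature of continuity, so it applies verbatim on the subspace $\Pi(C_u)$ with its relatively open cover $\{U_\alpha\cap\Pi(C_u)\}$. The second option is to work with the sheaf of continuous sections of $\mathcal{U}|_{\Pi(C_u)}$. A secondary check is that the lifts in Step 1 are independent of the representative chosen, which is exactly what Step 2 establishes.
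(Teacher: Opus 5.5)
Your proposal is correct and follows essentially the same route as the paper. Both arguments use $\phi_\alpha\circ\phi_\beta^{-1}(x,u)=(x,g_{\alpha\beta}(x)u)$ with the gluing condition $\kappa^*_\alpha=g_{\alpha\beta}\kappa^*_\beta$ to identify the local representatives as one intrinsic element on overlaps, get continuity of the glued section from each continuous $\kappa^*_\alpha$, and obtain $\kappa^*(x)\in K_{\mathrm{CBF}}(x)$ pointwise from the chart invariance of $K_{\mathrm{CBF},\alpha}$. Your version is cleaner than the paper's: you make the lift $\hat\kappa_\alpha=\phi_\alpha^{-1}(\cdot,\kappa^*_\alpha(\cdot))$ explicit, derive the restriction condition from the gluing condition in the correct direction, and handle the possibly non-open $\Pi(C_u)$ by gluing over the relatively open cover $\{U_\alpha\cap\Pi(C_u)\}$.
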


\begin{proof}
For the bundle atlas $\{(U_\alpha, \phi_\alpha)\}_{\alpha \in A}$, $x \in \Pi(C_u)$ exists on a local trivialization. 
Let $\kappa^*_\alpha(x)\in U_\alpha$ and $\kappa^*_\beta(x) \in U_\beta$ be local controllers for any $\alpha,\beta\in A$. Then on the overlap $x \in U_\alpha \cap U_\beta$, the transition function $g_{\alpha\beta}(x) \in \mathrm{GL}(m,\mathbb{R})$ satisfies $\phi_\alpha \circ \phi_\beta^{-1}(x,u) = 
(x, g_{\alpha\beta}(x)u)$, so the compatibility condition in \eqref{eq: Gluing Compatibility condition} is satisfied. Therefore, by the restriction condition in \eqref{eq: Restriction Compatibility condition}, $\kappa^*_\alpha(x)$ and $\kappa^*_\beta(x)$ are local representatives of the same element $\kappa^*(x)$ which confirms that the local sections $\kappa^*_\alpha(x) \in \Gamma(U_\alpha,\mathcal{U})$ and $\kappa^*_\beta(x) \in \Gamma(U_\beta,\mathcal{U})$ are compatible on overlapping
trivializations. By Theorem \ref{thm:Local_CBF}, each $\kappa^*_\alpha$ is single-valued and continuous on $\Pi(C_u)_\alpha$. By Lemma \ref{lem:Local_invariants}, continuity for $\kappa^*_\alpha(x)$ is chart-invariant under  $g_{\alpha\beta}(x)$, so $\kappa^*(x)$ is continuous. Since the atlas covers $\Pi(C_u)$, $\kappa^*$ is continuous on $\Pi(C_u)$. Therefore, $\pi \circ \kappa^*(x) = x$ holds on all of $\Pi(C_u)$, confirming 
$\kappa^* \in \Gamma(\Pi(C_u), \mathcal{U})$. By Definition \ref{def: Sheaf_CBF} and Lemma \ref{lem:Local_invariants}, $K_{\mathrm{CBF},\alpha}$ is chart-invariant and therefore $\kappa^*(x)\in K_\mathrm{CBF}(x)$ for all $x\in\Pi(C_u)$. Therefore, by \eqref{eq: sheaf_KCBF}, $\kappa^*\in\ \mathcal{K}_{\mathrm{CBF}}(\Pi(C_u))\subseteq \Gamma(\Pi(C_u),\mathcal{U})$.
\end{proof}

The global section $\kappa^*$ obtained above provides a continuous feedback law defined intrinsically on the bundle. The following theorem utilizes the global continuous feedback law in Theorem \ref{thm: Global_section} to guarantee forward invariance. 

\begin{theorem}
\label{thm: Sheaf_forward_inariance}
Let Assumption \ref{asm: Differential_inclusion_Axton} hold for the control system in \eqref{eq:System_1}, let $B:\mathcal{M}\rightarrow\mathbb{R}^v$ be a Sheaf-CBF for $(\Phi,C_u)$ defining a compact safe set $\mathcal{S}\subset\Pi(C_u)$ on $\mathcal{P} \subset\Pi(C_u)$ with respect to $\gamma:\Pi(C_u) \rightarrow \mathbb{R}^v$. Suppose the local controllers $\kappa^*_\alpha$ for $\alpha\in A$ satisfy the sheaf compatibility conditions in \eqref{eq: Restriction Compatibility condition} and \eqref{eq: Gluing Compatibility condition}. Then $\kappa^*$ is single-valued and continuous, is the unique global section, and $\mathcal{S}$ is forward invariant under \eqref{eq: Closed_Loop_System}. 
\end{theorem}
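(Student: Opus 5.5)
The proof has two parts. First we obtain the global feedback law from Theorem~\ref{thm: Global_section}. Then we reduce forward invariance to the Euclidean CBF result of \cite{Isaly.Mamaghani.ea2024} applied chartwise, together with a compactness argument for completeness.

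\emph{Step 1: the global section.} Since $B$ is a Sheaf-CBF, item 2) of Definition~\ref{def: Sheaf_CBF} and Theorem~\ref{thm:Local_CBF} show that each $\kappa^*_\alpha$ is single-valued and continuous on $\Pi(C_u)_\alpha$. The compatibility conditions \eqref{eq: Restriction Compatibility condition} and \eqref{eq: Gluing Compatibility condition} let us apply Theorem~\ref{thm: Global_section}, which gives a continuous section $\kappa^*\in\mathcal{K}_{\mathrm{CBF}}(\Pi(C_u))$ with $\kappa^*|_{U_\alpha}$ represented by $\kappa^*_\alpha$.

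Uniqueness comes from the sheaf property of $\mathcal{F}$ in Section~\ref{sec:Sheaf Compatibility}. If $\tilde\kappa$ were another section with the same restrictions, then $\tilde\kappa(x)=\kappa^*(x)$ for every $x$, because each $x$ lies in some $U_\alpha$. Single-valuedness also passes to the intrinsic map: $\kappa^*(x)=\phi_\alpha^{-1}(x,\kappa^*_\alpha(x))$, and the right-hand side is independent of $\alpha$ by \eqref{eq: Gluing Compatibility condition}. Finally, $\kappa^*(x)\in K_{\mathrm{CBF}}(x)\subset\Psi(x)$ on $\mathcal{P}$, so $\mathcal{P}\subset\Pi(C_{\kappa^*})$. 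In particular $\mathcal{N}(\partial\mathcal{S})$ and $\mathcal{S}$ lie in the closed-loop flow set.

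\emph{Step 2: closed-loop regularity.} Apply the preceding closed-loop lemma with $\kappa=\kappa^*$. Under Assumption~\ref{asm: Differential_inclusion_Axton} it gives that $\Phi_{\mathrm{cl}}$ is osc, locally precompact, nonempty, and fiberwise convex, and that $C_{\kappa^*}$ is closed. In each chart the local representative $\Phi_{\mathrm{cl},\alpha}$ then satisfies the standing hypotheses of the Euclidean invariance theorem of \cite{Isaly.Mamaghani.ea2024}.

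\emph{Step 3: pre-invariance.} Suppose a maximal solution starting in $\mathcal{S}$ leaves $\mathcal{S}$. Let $t^*$ be the exit time, so $x(t^*)\in\partial\mathcal{S}$; pick a chart $U_\alpha\ni x(t^*)$. For $t$ slightly larger than $t^*$ the solution stays in $U_\alpha\cap\mathcal{N}(\partial\mathcal{S})$. There it is a solution of the local inclusion $\dot x_\alpha\in\Phi_{\mathrm{cl},\alpha}(x_\alpha)$ in $\mathbb{R}^n$, and $B_\alpha=B\circ\phi_\alpha^{-1}$ is $C^1$.

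For any active index $i$ with $B_i(x(t^*))=0$, we have $\tfrac{d}{dt}B_i(x(t))=\langle dB_i,\dot x\rangle\le\Upsilon_i(\kappa^*(x))\le-\gamma_i(x)$. By item 2) of Definition~\ref{def:CBF_on_a_bundle}, $-\gamma_i\le 0$ on $\mathcal{N}(M_i)\setminus\mathcal{S}_i$. The comparison argument of \cite{Isaly.Mamaghani.ea2024} (their Thm.~1) then shows $B_i(x(t))\le 0$ on a right neighborhood of $t^*$, which contradicts the exit. Because pairing $dB_i$ with a tangent vector is chart-independent, the inequality transfers intrinsically.

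\emph{Step 4: completeness.} $\mathcal{S}$ is compact and $\Phi_{\mathrm{cl}}$ is locally precompact, so velocities are bounded on a neighborhood of $\mathcal{S}$ (use a finite subcover by charts). Hence solutions cannot escape in finite time. Existence of nontrivial solutions from each point of $\mathcal{S}$ follows from the viability condition: $K_{\mathrm{CBF}}$ makes $\Phi_{\mathrm{cl}}(x)$ meet the tangent cone of $\mathcal{S}$, and Assumption~\ref{asm: Differential_inclusion_Axton} holds chartwise. With $C_{\kappa^*}$ closed, the standard maximal-solution argument of \cite{Goebel.Sanfelice.ea2012} makes every maximal solution complete.

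The main obstacle is Step 3. The nonsmooth boundary where several $B_i$ are active, together with the requirement that the solution stay inside a single chart near $t^*$, has to be handled carefully. I would rely on local chart containment plus the Euclidean result rather than redoing the comparison argument intrinsically.
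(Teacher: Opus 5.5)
Your proposal is correct and follows the same route as the paper. You obtain the unique continuous global section from Theorem~\ref{thm: Global_section}, deduce forward pre-invariance from $\kappa^*(x)\in K_{\mathrm{CBF}}(x)$ together with Assumption~\ref{asm: Differential_inclusion_Axton} and the Euclidean CBF comparison argument, and deduce completeness from compactness of $\mathcal{S}$; your Steps 2--4 spell out what the paper asserts in a line each.

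One fix is needed in Step 4. The claim that $\Phi_{\mathrm{cl}}(x)$ meets the tangent cone of $\mathcal{S}$ would require a constraint qualification on the active $dB_i$, which is not assumed. It is also unnecessary: $\mathcal{S}\subset\mathrm{int}(\mathcal{S})\cup\mathcal{N}(\partial\mathcal{S})$ lies in the interior of the closed-loop flow set, as your Step 1 essentially observes. Hence nontrivial solutions exist from every point of $\mathcal{S}$, and pre-invariance keeps them in $\mathcal{S}$.
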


\begin{proof}
By Theorem \ref{thm: Global_section}, the collection $\{\kappa^*_\alpha\}_{\alpha \in A}$ satisfying \eqref{eq: Gluing Compatibility condition} uniquely determines the global section $\kappa^*\in\ \mathcal{K}_{\mathrm{CBF}}(\Pi(C_u))$ that is single-valued and continuous on $\Pi(C_u)$. Hence, by \eqref{eq: sheaf_KCBF}, $\kappa^*(x)\in K_\mathrm{CBF}(x)$ for all $x\in\Pi(C_u)$. Therefore, $\kappa^{*}(x(t))\in{K}_\mathrm{CBF}(x(t))$ for all $t\in \mathrm{dom}(x)$. Satisfaction of Assumption \ref{asm: Differential_inclusion_Axton} then ensures forward pre-invariance of $\mathcal{S}$ under the closed-loop dynamics in \eqref{eq: Closed_Loop_System}.  Maximal solutions of continuous vector fields are either complete or leave every compact subset of $\mathcal{M}$  in finite time \cite[Prop.~2]{Maghenem.Sanfelice2021}. By the forward pre-invariance of $\mathcal{S}$, every solution initialized in $\mathcal{S}$ remains in $\mathcal{S}$ for all $t\in\mathrm{dom}(x)$. Since $\mathcal{S}$ is also compact, such a solution cannot leave every compact subset of $\mathcal{M}$ in finite time \cite[Lem. 9.19]{Lee2012}. Therefore, all maximal solutions are complete and $\mathcal{S}$ is forward invariant by Definition \ref{def: Forward-Invariance-Euclidean}. 
\end{proof}

\section{Simulated Results}
\label{sec: Simulated Results}
In this example, we consider the M{\"o}bius strip as the state manifold $\mathcal{M}$ and a control bundle $(\mathcal{U},\pi,\mathcal{M})\cong(T\mathcal{M},\pi_{T\mathcal{M}},\mathcal{M})$, where the half-twist renders the control bundle globally nontrivial and sheaf conditions in~\eqref{eq: Restriction Compatibility condition} and~\eqref{eq: Gluing Compatibility condition} nontrivial. 
The dynamics are given as $\dot{x}=F(x)+u_x$ for $x\in\mathcal{M}$ and $u_x\in \mathcal{U}_x \cong T_x\mathcal{M}$. 
The base coordinate is denoted by $\theta\in S^1$ and the fiber coordinate is denoted by $\ell\in\mathbb{R}$. 
The control objective is to track a desired trajectory $x_d(t)\in\mathcal{M}$ while remaining within the safe set $\mathcal{S}\subset\mathcal{M}$. 

We consider an open cover composed of two chart domains $U_\alpha\triangleq S^1\backslash \{ \theta=\pi \}$ and $U_\beta\triangleq S^1\backslash \{ \theta = 0 \}$, with transition function between charts $g_{\alpha\beta}\triangleq\mathtt{diag}\{1,-1\}$ on the arc $\pi < \theta \leq 2\pi$ and $g_{\alpha\beta}\triangleq\mathtt{diag}\{1,1\}$ otherwise, encoding the half-twist of the M{\"o}bius strip.
A transition between charts occurs when the agent enters a region of $\pm 0.05\ \mathrm{rad}$ about the obstruction on the chart; e.g., if the agent's current local chart is $U_\alpha$ and the agent enters $\pi - 0.05 \leq \theta\leq \pi + 0.05$, then the agent applies the transition function and enters chart $U_\beta$. A similar transition occurs for active chart $U_\beta$ and the region $-0.05 \leq \theta \leq 0.05$. 

\begin{figure}
    \centering
    \includegraphics[width=0.9\linewidth]{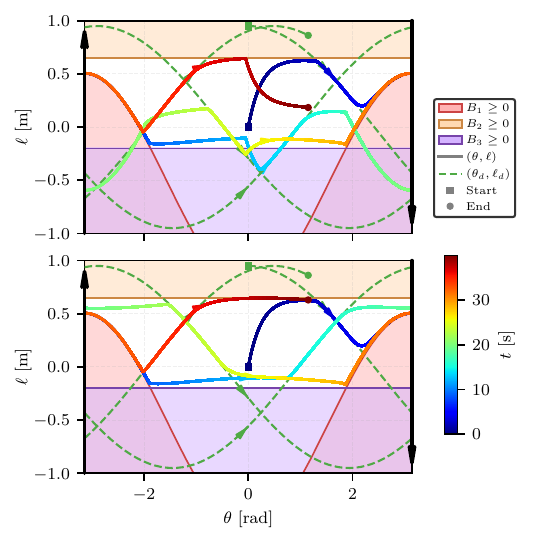}
    \caption{The top panel depicts a CBF which does not satisfy the gluing condition in~\eqref{eq: Gluing Compatibility condition}, causing a violation of safety. The bottom panel depicts the Sheaf-CBF regulating safety. The state progression on $\mathcal{M}$ is depicted on $U_\alpha$. 
    The progression of time is depicted through the changing color map indicated on the legend.
    The shaded regions indicate unsafe portions of the state space. Arrows along each trajectory indicate the direction of time evolution. The black arrows on the panel edges indicate the half-twist of the M{\"o}bius strip. When the arrowed edges are aligned and glued, the nonorientable surface is constructed.}
    \label{fig:Flat-Chart-Comparision}
\end{figure}

We express the following quantities in chart-local coordinates for a chart domain denoted as $c\in\{ \alpha,\beta\}$. 
The components of the desired trajectory are given as $\theta_d (t)\triangleq \mathtt{mod} (10t,2\pi)$ and $\ell_d(t)\triangleq 0.95\cos (0.325t - 2\pi)$. 
The components of the tracking error are $e_\theta\triangleq \mathtt{atan2}(\sin(\theta-\theta_d),\cos(\theta-\theta_d))$, where $\mathtt{atan2}$ is the $2$-argument arctangent, and $e_{\ell}\triangleq \ell - \ell_d$. 
The nominal controller is given as $u_{\text{nom}}=[-10 e_\theta + \dot{\theta}_d ,\, -10e_\ell + \dot{\ell}_d ]^\top$. 
The agent is initialized in chart $\alpha$ with $\theta (0)=\ell(0)=0$.
The simulation is conducted for a duration of $\mathtt{tf}=40\ \mathrm{s}$ with a step size of $\mathtt{dt}=0.001\ \mathrm{s}$.

The barrier function is $B^c(x)\triangleq [\cos(\frac{2\pi}{3})-\cos (\theta)-\ell,\ell -0.65,-\ell-0.2]^\top$. 
The first element of $B^c$ encodes a constraint on $\theta$ which can be circumvented by a sufficiently large value of $\ell$. 
The remaining elements of $B^c$ encode constraints on the fiber such that $-0.2\leq \ell (t)\leq 0.65$ for all $t\in\mathbb{R}_{\geq 0}$. 
Each element of $B^c$ has a performance function $\gamma_i^c(x)=2B_i^c(x)$ and $C_u\triangleq \{ u\in\mathbb{R}^2\ \vert\ \lVert u \rVert \leq 20 \}$. 
The safe controller at $x$ is the solution to $\kappa_c^\ast (x)\triangleq \arg\min_{u\in K_{\mathrm{CBF},c}(x)}\lVert u - u_{\text{nom}} \rVert^2$, where $K_{\mathrm{CBF},c}(x)$ is defined in~\eqref{eq: K_CBF}. 
The closed-loop dynamics are expressed as $[ \dot{\theta} ,\; \dot{\ell} ]^\top=\, [ (\kappa^\ast_c (x))_\theta, \; 0.1 \cos(\theta) + (\kappa_c^\ast (x))_\ell ]^\top$, where $(\kappa_c^\ast (x))_\theta,(\kappa_c^\ast (x))_\ell$ denote the $\theta$ and $\ell$ components of $\kappa_c^\ast$, respectively. 

Figure~\ref{fig:Flat-Chart-Comparision} depicts the evolution of two CBF-based controllers performing the trajectory tracking task on $\mathcal{M}$ with respect to $U_\alpha$.
The controller in the bottom panel of Figure~\ref{fig:Flat-Chart-Comparision} applies the transition map $g_{\alpha\beta}$ to its CBF at appropriate transitions between charts, while the controller in the top panel ignores this transition 
for all $t\in [0,40]\ \mathrm{s}$. The CBF that is not updated according to the transition function is equivalent to treating the state space $\mathcal{M}$ as the globally trivial domain $S^1\times\mathbb{R}$. Consequently, after traversing about $\theta$ to the identified point at $0\sim2\pi$, neglecting the transition functions and gluing conditions in~\eqref{eq: Gluing Compatibility condition} produces incompatible local coordinates that persist until the agent's angular state wraps again to $0\sim 2\pi$, resulting in safety violations. The Sheaf-CBF respects the transitions, and thereby enables the agent to remain in the safe set for the simulation duration. 

\section{Conclusion}
This work introduces the first sheaf-theoretic framework for CBFs for constrained control differential inclusions on vector bundles. Intrinsic regularity conditions on the bundle were established fiberwise under local trivializations. Using local properties, the sheaf restriction and gluing conditions on overlapping trivializations ensured compatibility for assembling local controllers into a unique, continuous global section. With the additional sheaf-theoretic notions for local feedback, we construct the Sheaf-CBF and provide forward-invariance guarantees. Simulations on a nontrivial bundle demonstrate the necessity of the sheaf conditions across overlapping charts and successfully constrain the states within the intersecting safe region.

\section{Appendix}
\subsection{Proof for Lemma 1}
\begin{proof}
(i) Suppose $\Psi_\alpha$ is osc at $x$ and let $e_x \in \limsup_{z\rightarrow x} \Psi(z)$. Then there exists $z_i\rightarrow x$ and $e_{z_i}\rightarrow e_x$ in $E$ with $e_{z_i}\in\Psi(z_i)$. For sufficiently large $i$, $z_i \in U_\alpha$ so setting $v_i\triangleq \mathrm{pr}_2 \circ \phi_\alpha(e_{z_i})\in \Psi_\alpha(z_i)$ and $v\triangleq\mathrm{pr}_2 \circ \phi_\alpha(e_{x})$, by the continuity of $\phi_\alpha$ yields $v_i\rightarrow v$. Since $\Psi_\alpha$ is osc, $v\in\Psi_\alpha(x)$, and since $v=\mathrm{pr}_2 \circ \phi_\alpha(e_x)$, then $e_x=\phi^{-1}_\alpha (x,v)\in \Psi(x)$. Therefore $\limsup_{z\rightarrow x} \Psi(z)\subset\Psi(x)$ and $\Psi$ is osc at $x$. Now suppose $\Psi$ is osc at $x$ and let $v \in \limsup_{z\rightarrow x} \Psi_\alpha(z)$. Then there exists $z_i\rightarrow x$ and $v_{i}\rightarrow v$ in $\mathbb{R}^k$ with $v_{i}\in\Psi_\alpha(z_i)$. Setting $e_{z_i}\triangleq \phi^{-1}_\alpha (z_i,v_i)\in \Psi(z_i)$, by the continuity of $\phi^{-1}_\alpha$ yields $e_{z_i} \rightarrow e_x$ where $e_x = \phi^{-1}_\alpha(x,v)\in E_x$. Since $\Psi$ is osc, then $e_x\in \Psi(x)$, hence $v =\mathrm{pr}_2 \circ \phi_\alpha(e_{x}) \in \Psi_\alpha(x)$. Therefore $\limsup_{z\rightarrow x} \Psi_\alpha(z)\subset\Psi_\alpha(x)$ and $\Psi_\alpha$ is osc at $x$. 

Suppose $\Psi_\alpha$ is isc at $x$ and let $e_x\in\Psi(x)$ and $z_i \rightarrow x$ with $z_i\in U_\alpha$ for $i$ sufficiently large. With $v\triangleq \mathrm{pr}_2 \circ\phi_\alpha(e_x) \in\Psi_\alpha(x)$, by the isc of $\Psi_\alpha$, there exists $v_i\in \Psi_\alpha(z_i)$ with $v_i\rightarrow v$. Setting $e_{z_i} \triangleq \phi^{-1}_\alpha (z_i,v_i) \in \Psi(z_i)$, by the continuity of $\phi^{-1}_\alpha$ yields $e_{z_i} \rightarrow e_x$ where $e_x = \phi^{-1}_\alpha(x,v)\in \Psi(x)$. Since $z_i \to x$ was an arbitrary sequence, $e_x \in \liminf_{z \to x}\Psi(z)$. Since $e_x \in \Psi(x)$ was arbitrary, $\Psi(x) \subset \liminf_{z \to x}\Psi(z)$ and $\Psi$ is isc at $x$. Now suppose $\Psi$ is isc at $x$ and let $v\in\Psi_\alpha(x)$ and $z_i \rightarrow x$ with $z_i\in U_\alpha$. Setting $e_x\triangleq \phi^{-1}_\alpha (x,v)\in \Psi(x)$, by isc of $\Psi$, there exists $e_{z_i}\in \Psi(z_i)$ with $e_{z_i}\rightarrow e_x$. Setting $v_i\triangleq \mathrm{pr}_2 \circ \phi_\alpha(e_{z_i}) \in \Psi_\alpha(z_i)$, by the continuity of $\phi_\alpha$ yields $v_i \rightarrow v$ where $v=\mathrm{pr}_2 \circ \phi_\alpha(e_x)$. Since $z_i \to x$ was arbitrary, $v \in \liminf_{z \to x}\Psi_\alpha(z)$. Since $v \in \Psi_\alpha(x)$ was arbitrary, $\Psi_\alpha(x) \subset \liminf_{z \to x}\Psi_\alpha(z)$ and $\Psi_\alpha$ is isc at $x$.

The final direction shows chart independence regarding osc and isc. 
For osc, assume $\Psi_\alpha$ is osc at $x\in U_\alpha\cap U_\beta$ and let $v\in \limsup_{z\rightarrow x} \Psi_\beta(z)$. Then there exists $z_i\rightarrow x$ and $v_i\rightarrow v$ with $v_i\in\Psi_{\beta}(z_i)$. Setting $w_i\triangleq g_{\alpha\beta}(z_i)v_i$ yields $w_i\in\Psi_{\alpha}(z_i)$. The transition function $g_{\alpha\beta}$ is continuous, yielding $w_i\rightarrow g_{\alpha\beta}(x)v$. Hence, $g_{\alpha\beta}(x)v \in \limsup_{z\rightarrow x} \Psi_\alpha(z)$. Therefore $v\in\Psi_{\beta}(x)$ and $\limsup_{z\rightarrow x} \Psi_\beta(z)\subset\Psi_{\beta}(x)$. For isc, assume $\Psi_\alpha$ is isc at $x\in U_\alpha\cap U_\beta$ and let $v\in \Psi_\beta(x)$. Then setting $w\triangleq g_{\alpha\beta}(x)v \in \Psi_\alpha(x)$. For every sequence $z_i\rightarrow x$, there exists $w_i\in \Psi_{\alpha}(z_i)$ with $w_i\rightarrow w$. Setting $v_i\triangleq g_{\beta\alpha}(z_i)w_i$ and the continuity of $g_{\beta\alpha}$ yields $v_i\in \Psi_\beta(z_i)$ and $v_i\rightarrow g_{\beta\alpha}(x) w =v$. Thus, $v\in \mathrm{lim}\,\mathrm{inf}_{z\rightarrow x} \Psi_\beta(z)$, and therefore $\Psi_\beta(x) \subset \liminf_{z\rightarrow x} \Psi_\beta(z)$. The converse for osc and isc follows symmetrically. If both conditions hold, then $\Psi$ is continuous at $x$.  

(ii) Since $\left. \phi_\alpha \right|_{E_x}: E_x\rightarrow \{x\}\times \mathbb{R}^k$ is a linear isomorphism \cite[Lem. 10.6]{Lee2012}, by \cite[Prop. 2.21]{Rockafellar.Wets2009}$, \Psi(x)\subset E_x$ convex if and only if $\phi_\alpha(\Psi(x)) = \{x\}\times \Psi_\alpha(x)$ is convex in $\{x\}\times \mathbb{R}^k$. If $\Psi(x)\subset E_x$ convex, then for any $e_0,e_1\in\Psi(x)$, with $u_0=\phi_\alpha(e_0),\,u_1=\phi_\alpha(e_1)$, and $\tau\in[0,1]$, the linearity of $\phi_\alpha$ on a convex combination gives $(1-\tau)u_0 + \tau u_1 = \phi_\alpha((1-\tau)e_0+\tau e_1) \in \phi_\alpha(\Psi(x))=\Psi_\alpha(x)$. The converse follows by applying the identical argument to $\phi^{-1}$. Consequently, convexity in $\{x\}\times \Psi_\alpha(x)$ holds if and only if $\Psi_\alpha(x)\subset\mathbb{R}^k$ is convex. 
The transition function $g_{\alpha\beta}(x)\in\mathrm{GL}(k,\mathbb{R})$ is linear and by \cite[Prop. 2.21]{Rockafellar.Wets2009} preserves convexity on overlapping charts given $\Psi_\alpha(x)=g_{\alpha\beta}(x)\Psi_\beta(x)$.

(iii) Suppose $\Psi_\alpha$ is locally bounded at $x$ and let $V \in \mathcal{N}(x)$ with $\overline{V} \subset U_\alpha$ such that $\Psi_\alpha(V)$ is bounded in $\mathbb{R}^k$. By Heine-Borel, $\overline{\Psi_\alpha(V)}$ is compact in $\mathbb{R}^k$ and since $\mathcal{M}$ is locally compact \cite[Prop. 1.12]{Lee2012}, then $\overline{V}$ is compact in $U_\alpha$ and the product $\overline{V} \times \overline{\Psi_\alpha(V)}$ is compact in $U_\alpha\times \mathbb{R}^k$. Since $\phi^{-1}_\alpha$ is a homeomorphism, $W \triangleq \phi^{-1}_\alpha(\overline{V} \times \overline{\Psi_\alpha(V)})$ is compact in $\left. E \right|_{U_\alpha}$. For every $z\in V$ and every $e_{z}\in\Psi(z)$, there exists $\phi_\alpha(e_{z})=(z,v)$ with $v\in\Psi_\alpha(z)\subset\Psi_\alpha(V)$, hence $e_z\in W$. Therefore $\Psi(V)\subset W$. Since $E$ is Hausdorff, $W$ is closed in $E$, so $\overline{\Psi(V)} \subset \overline{W}$. A closed subset of a compact set is compact, hence $\overline{\Psi(V)}$ is compact and therefore $\Psi$ is locally precompact at $x$. 
Now suppose $\Psi$ is locally precompact at $x$ and let $V \in \mathcal{N}(x)$ with $V \subset U_\alpha$ such that $\overline{\Psi(V)}$ is compact in $\left. E \right|_{U_\alpha}$. Since $\phi_\alpha$ is a homeomorphism, $\phi_\alpha(\overline{\Psi(V)})$ is compact in $U_\alpha\times \mathbb{R}^k$. Since the projection is continuous, $\mathrm{pr}_2(\phi_\alpha(\overline{\Psi(V)}))$ is compact in $\mathbb{R}^k$. For every $z\in V$ and $e_z\in\Psi(z)$, $\phi_\alpha(e_z)=(z,v)$ with $v\in\Psi_\alpha(z)\subset \Psi_\alpha(V)$. Therefore, $\Psi_\alpha(V) \subset \mathrm{pr}_2(\phi_\alpha(\overline{\Psi(V)}))$, and lies in a compact set in $\mathbb{R}^k$. By Heine-Borel, $\Psi_\alpha(V)$ is bounded, and $\Psi_\alpha$ is locally bounded at $x$. Since local precompactness of $\Psi$ at $x$ is a property of the total space $E$, $\Psi_\alpha$ is locally bounded at $x$ if and only if $\Psi_\beta$ is locally bounded at $x$ 
when $x\in U_\alpha \cap U_\beta$.
\end{proof}

\bibliographystyle{ieeetr}
\bibliography{master}

@article{Isaly.Mamaghani.ea2024,
  url1 = { http://ncr.mae.ufl.edu/papers/tac24_4.pdf },
  volume = { 69 },
  pages = { 7326 - 7339 },
  number = { 11 },
  year = { 2024 },
  title = { On the Feasibility and Continuity of Feedback Controllers Defined by Multiple Control Barrier Functions },
  journal = { IEEE Trans. Autom. Control },
  author = { A. Isaly and M. Mamaghani and R. G. Sanfelice and W. E. Dixon },
}

@InProceedings{Ames.Grizzle.ea2014,
  Title                    = {Control barrier function based quadratic programs with application to adaptive cruise control},
  Author                   = {Ames, Aaron D and Grizzle, Jessy W and Tabuada, Paulo},
  Booktitle                = {IEEE Conf. Decis. Control},
  Year                     = {2014},
  Pages                    = {6271--6278}
}

@InProceedings{Ames.Coogan.ea2019,
  author    = {Aaron D. Ames and Samuel Coogan and Magnus Egerstedt and Gennaro Notomista and Koushil Sreenath and Paulo Tabuada},
  booktitle = {Proc. Eur. Control Conf.},
  title     = {Control barrier functions: Theory and applications},
  year      = {2019},
  pages     = {3420-3431},
}

@Book{Rockafellar.Wets2009,
  Title                    = {Variational analysis},
  Author                   = {Rockafellar, R Tyrrell and Wets, Roger J-B},
  Publisher                = {Springer Science \& Business Media},
  Year                     = {2009},
  Volume                   = {317}
}

@Article{Maghenem.Sanfelice2021,
  author    = {Maghenem, Mohamed and Sanfelice, Ricardo G},
  journal   = {Automatica},
  title     = {Sufficient conditions for forward invariance and contractivity in hybrid inclusions using barrier functions},
  year      = {2021},
  volume    = {124},
  publisher = {Elsevier},
}

@book{Lee2012,
  title = {Introduction to {{Smooth Manifolds}}},
  author = {Lee, John M.},
  year = 2012,
  series = {Graduate {{Texts}} in {{Mathematics}}},
  volume = {218},
  publisher = {Springer New York},
}

@book{Tu2011,
  title = {An {{Introduction}} to {{Manifolds}}},
  author = {Tu, Loring W.},
  year = 2011,
  publisher = {Springer New York},
}

@inproceedings{Gonzalez.Sweatland.ea2025,
  year = { 2025 },
  title = { Pointwise Optimal Feedback Laws for Hybrid Inclusions using Multiple Control Barrier Functions },
  booktitle = { Proc. IEEE Conf. Decis. Control },
  author = { C. Gonzalez and H. Sweatland and K. Currier and W. E. Dixon and R. Sanfelice },
}

@book{Bullo.Lewis2005,
  Author = {Bullo, Francesco and Lewis, Andrew D},
  Title = {Geometric control of mechanical systems: modeling,
           analysis, and design for simple mechanical control
           systems},
  Volume = {49},
  Year = {2005},
  Publisher = {Springer}
}

@Book{Goebel.Sanfelice.ea2012,
  author    = {Goebel, R. and Sanfelice, R. G. and Teel, A. R.},
  publisher = {Princeton University Press},
  title     = {Hybrid Dynamical Systems},
  year      = {2012},
}

@Book{Sanfelice2021,
  author    = {Ricardo G. Sanfelice},
  publisher = {Princeton University Press},
  title     = {Hybrid Feedback Control},
  year      = {2021},
}

@InProceedings{Wu.Sreenath2015,
  author    = {Guofan Wu and Koushil Sreenath},
  booktitle = {Proc. Am. Control Conf.},
  title     = {Safety-critical and constrained geometric control synthesis using control {L}yapunov and control barrier functions for systems evolving on manifolds},
  year      = {2015},
  pages     = {2038-2044},
}

@Article{deSa.Ong.ea2025,
  Author  = {de Sa, Massimiliano and Ong, Pio and Ames, Aaron D.},
  Title   = {From Bundles to Backstepping: Geometric
             Control Barrier Functions for
             Safety-Critical Control on Manifolds},
  Journal = {arXiv preprint arXiv:2510.20202},
  Year    = {2025}
}

@InProceedings{Jirwankar.Sanfelice2025a,
  Author    = {Jirwankar, P. P. and Sanfelice, R. G.},
  Title     = {Geometric Hybrid Dynamical Systems on
               Intrinsic Manifolds: Part {I} ---
               Framework and Hybrid Lyapunov Theorem},
  Booktitle = {Proc. IEEE Conf. Decis. Control},
  Year      = {2025},
  Pages     = {6593--6598}
}

@InProceedings{Jirwankar.Ochoa.ea2025b,
  Author    = {Jirwankar, P. P. and Ochoa, D. E. and Sanfelice, R. G.},
  Title     = {Geometric Hybrid Dynamical Systems on Intrinsic Manifolds: Part {II} --- Hybrid Invariance Principle},
  Booktitle = {Proc. IEEE Conf. Decis. Control},
  Year      = {2025},
  Pages     = {6599--6604}
}

@book{MacLane.Moerdijk.ea1994,
  title     = {Sheaves in {Geometry} and {Logic}: {A} First Introduction 
               to Topos Theory},
  author    = {Mac Lane, Saunders and Moerdijk, Ieke},
  year      = {1994},
  publisher = {Springer},
}

@book{Lee2024,
  author    = {John M. Lee},
  title     = {Introduction to Complex Manifolds},
  series    = {Graduate Studies in Mathematics},
  volume    = {244},
  publisher = {American Mathematical Society},
  year      = {2024},
}

@unpublished{TemporalCellularSheaf2026,
author = {Fallin, Brandon and Leal, Wilmer and Copeland, Austin and Bumpus, Benjamin and Fairbanks, James and Dixon, Warren E.},
title = {A Temporal Cellular Sheaf Framework for Multi-Agent Systems with Switching Topologies},
year = {2026},
note = {Unpublished}
}

@Article{LealNarratives2026,
title={Time-Varying Data as Sheaves: an Invitation to Narratives},
author={Wilmer Leal and Benjamin Merlin Bumpus and Jana K. Nickel and Johan Garc\'ia and James Fairbanks and Warren Dixon},
year={2026},
Journal = {arXiv preprint arXiv:2609.09056},
url={https://arxiv.org/abs/2609.09056},
}

\end{document}